\def\qed{\rule{0.4em}{1.4ex}} 
 \newcommand{\set}[1]{\{#1\}}
\newcommand{\outcome}{\mathrm{Outcome}}
\newcommand{\Prb}{\mathrm{Pr}}
\newcommand{\trans}{\delta}
\newcommand{\dist}{{\cal D}}
\newcommand{\distr}{{\cal D}}
\def\abs#1{\ensuremath{\lvert #1\rvert}}
\newcommand{\un}{\mathsf{un}}
\newcommand{\Cone}{\mathsf{Cone}}
\newcommand{\ObsSeq}{\mathsf{ObsSeq}}
\newcommand{\slopefrac}[2]{\leavevmode\kern.1em
  \raise .5ex\hbox{\the\scriptfont0 #1}\kern-.1em
  /\kern-.15em\lower .25ex\hbox{\the\scriptfont0 #2}}
\newcommand{\straa}{\alpha} \newcommand{\Straa}{{\mathcal A}}
\newcommand{\strab}{\beta} \newcommand{\Strab}{{\mathcal B}}
 \newcommand{\Nats}{\mathbb{N}}
\newcommand{\Inf}{\mathrm{Inf}} 
\newcommand{\ov}{\overline}
\newcommand{\Pref}{{\sf Prefs}}
\newcommand{\Play}{{\sf Plays}}
\newcommand{\Last}{{\sf Last}}
\newcommand{\tuple}[1]{\langle #1 \rangle}
\newcommand{\Safe}{\mathsf{Safe}}
\newcommand{\Reach}{\mathsf{Reach}}
\newcommand{\Buchi}{\mathsf{Buchi}}
\newcommand{\coBuchi}{\mathsf{coBuchi}}
\newcommand{\Parity}{\mathsf{Parity}}
\newcommand{\Outcome}{{\mathsf{Outcome}}}
\newcommand{\target}{{\cal T}}
\newcommand{\Obs}{{\cal{O}}}
\newcommand{\obs}{\mathsf{obs}}
\newcommand{\cale}{\mathcal{E}}
\newcommand{\wt}{\widetilde}
\newcommand{\wh}{\widehat}
\def\@comment{\let\do\@makeother \dospecials\catcode`\^^M=10\def\par{}}
\def\begincomment{\@comment\@xcomment}
\newenvironment{comment}{\begincomment}{}
\title{Equivalence of Games with Probabilistic Uncertainty and Partial-observation Games}
\author{%
Krishnendu Chatterjee\inst{1}%
\and Martin Chmelik\inst{1}
\and Rupak Majumdar\inst{2}}
\institute{%
IST Austria (Institute of Science and Technology Austria)
\and
MPI-SWS, Germany
}
\date{}
\begin{document}
\maketitle
\begin{abstract}
We introduce games with probabilistic uncertainty,
a natural model for controller synthesis in which the
controller observes the state of the system through imprecise sensors 
that provide correct information about the current state with a fixed probability.
That is, in each step, the sensors return an observed state, and 
given the observed state, there is a probability distribution (due to the 
estimation error) over the actual current state.
The controller must base its decision on the observed state (rather than the 
actual current state, which it does not know).
On the other hand, we assume that the environment can perfectly observe
the current state.
We show that our model can be reduced in polynomial time to standard
partial-observation stochastic games, and vice-versa.
As a consequence we establish the precise decidability frontier for the new
class of games, and for most of the decidable problems establish 
optimal complexity results.
\end{abstract}

\section{Introduction}

In a control system, a controller interacts with its
environment through sensors and actuators.
The controller observes the state of the environment 
through a set of sensors,
computes a control signal that depends 
on the history of observed sensor readings,
and feeds the control signal to the environment through actuators. 
The state of the environment is then updated as a function of the control
signal as well as a disturbance signal that models external 
inputs to the environment.
In a \emph{reactive} setting, the sense-compute-actuate cycle repeats
forever, resulting in an infinite trace of environment states.
The objective of the controller is to ensure that the trace belongs to
a given specification of ``good'' traces.
The controller synthesis problem asks, given the dynamical law that
specifies how the environment state changes according to the
controller inputs and external disturbances, 
and a specification of good traces, to synthesize a control
law that ensures that the environment traces are good, no matter how
external disturbances behave.

Controller synthesis has been studied extensively
for deterministic games with $\omega$-regular
specifications~\cite{BuchiLandweber69,Rabin69,KV00b}.
In this setting, the problem is modeled as a game on a graph. 
The vertices of the graph represent system states, 
and are divided into ``controller states''
and ``disturbance states.'' 
At a controller state, the controller
chooses an outgoing edge and moves to a neighboring vertex along this edge.
At a disturbance state, the disturbance chooses an outgoing edge and
moves along this edge.
This continues ad infinitum, defining a sequence of states. 
If this sequence satisfies the specification, the controller wins; otherwise,
the disturbance wins.
The games are called {\em perfect observation}, since both players have
exact knowledge of the current state and the history of the game.

The study of perfect-observation deterministic games have been extended to 
systems with {\em partial observation}, in which the controller can only observe part of the
environment's state~\cite{Reif79,CDHR07}, and to {\em stochastic dynamics}~\cite{FV97,Con92,crg-tcs07,dAM01},
in which the state updates happen according to a probabilistic law.

The ``standard model'' of partial-observation stochastic games~\cite{CDHR07,BD08,BGG09} 
is described as an extension to the above graph model, by fixing an
equivalence relation on the vertices (the ``observation function''),
and stipulating that the controller only sees the equivalence class of
the current vertex, not the particular vertex the state is in.
In addition, the transitions of the graph are stochastic: the
controller and the disturbance each choose some move, and the next
vertex is chosen according to a probability distribution based on the
current vertex and the chosen move.

In this paper, we introduce a different, albeit natural, model of probabilistic uncertainty
in controller synthesis.
Consider a state given by $n$ bits.
The sensors used to measure the state are typically not perfect, and 
observing the state through the sensor results in some bits being flipped with 
some known probability (probabilistic noise).
In applications where the controller observes the state bits through a network,
then the probabilistic noise in the communication channels results in bits
being flipped with some known probability 
(according to the classical Shannon's communication channel model).
Thus, the controller observes $n$ bits through the sensor,
and this estimate defines a probability distribution over the state space for the
current state.
In contrast, we allow the disturbance to precisely observe the state,
corresponding to a worst case assumption on the disturbance.
The objective of the controller is to find a strategy that ensures
that the system satisfies the specification under this probabilistic
uncertainty on the current state.
We distinguish between two models of the disturbance. 
In the first model, the disturbance observes the correct sequence of states as well
as both the observation of the controller and the sequence of controller moves.
In the second model, the disturbance observes the correct sequence of states as well
as the sequence of controller moves (but not the observation of the controller).
It turns out that the two models give rise to subtle differences in defining the probability
measures on the games, as well as different complexities in the solution algorithms.

%%{\bf KRISH: Modified the paragraphs below.}
Our model (which we refer to as games with probabilistic uncertainty) is inspired by analogous 
models of state estimation under probabilistic noise in continuous control systems.
We believe this model of games with probabilistic uncertainty naturally captures
the behavior of many sensor-based control systems.
Intuitively, the standard model of partial-observation games represent ``partial but 
correct information'' where the controller can observe correctly only the first $k<n$ 
bits of the state (i.e., the observation is partial as the controller observes only
a part of the state bits, but the information about the observed state bits is always 
correct). 
In contrast,  our model of games with probabilistic uncertainty 
represent ``complete but uncertain information'' where the controller can observe 
all the $n$ bits of the state but with uncertainty of observation (i.e., the controller
can observe all the bits, but each bit is correct with some probability).
Since the type of uncertain information in our model is very different from the 
standard models of partial-observation games studied
in the literature, the relationship between them is not immediate.

Our main contribution, along with the introduction of the natural model of 
games with probabilistic uncertainty, 
%%(to model control problems with state estimation under probabilistic noise), 
is establishing the equivalence of the 
new class of games and partial-observation games.
Our main technical result is a polynomial-time reduction from this new model 
of games with probabilistic uncertainty to standard partial-observation games, and a converse 
reduction from partially-observable Markov decision processes (POMDPs) to
games with probabilistic uncertainty.
The results to establish the equivalence of the two classes of games which 
represent two different notions of information (partial but correct vs complete
but uncertain) are quite intricate. 
For example, for the new class of games the inductive definition of probability 
measure is subtle and different from the classical definition of 
probability measure for probabilistic systems~\cite{Vardi85,CY95}.
This is because the controller observes a history that can be completely different
from the actual history, whereas the environment (or disturbance) observes the actual history.
We first inductively define a probability measure of observed history,
given the actual history, and use it to define the probability measure inductively. 
We show how our polynomial constructions for reduction capture the subtleties
in the probability measure, and by establishing precise mapping of strategies 
(which is at the heart of the proof of correctness of the reduction) we obtain the 
desired equivalence result.

In the positive direction, our reduction allows us to solve controller
synthesis problems for games with probabilistic uncertainty against
$\omega$-regular specifications, using algorithms of~\cite{CDHR07,BGG09}.
In the negative direction, we get lower bounds on the hardness of
problems by using known lower bounds for POMDPs using the hardness results of~\cite{BBG08,CDH10a}.
In particular, with our reductions we establish precisely the decidability 
frontier of games with probabilistic uncertainty for various classes of 
parity objectives (a canonical form to express $\omega$-regular specifications);
and for most of the decidable problems we establish EXPTIME-complete
bounds, and in some cases 2EXPTIME upper bounds and EXPTIME lower bounds 
(see Table~\ref{tab:complexity}).
Moreover, our reduction allows the rich body of algorithms (such as symbolic
and anti-chain based algorithms~\cite{CDHR07,BGG09}) for partial-observation 
games, along with any future algorithmic developments for partial-observation 
games, to be applicable to solve games with probabilistic uncertainty.
In summary, our results provide precise decidability frontier, optimal complexity
(in most cases), and algorithmic solutions for games with probabilistic uncertainty, 
that is a natural model for control problems with state estimation under 
probabilistic noise. 
%%Detailed proofs omitted due to lack of space is presented in the full version attached as appendix.

\section{Games with Probabilistic Uncertainty}
\label{sec:probimperfect}

In this section we introduce a class of games with probabilistic
imperfect information, and call them games with probabilistic uncertainty.

\smallskip\noindent{\em Probability distribution.}
A \emph{probability distribution} on a finite set $A$ is a function
$\kappa: A \to [0,1]$ such that $\sum_{a \in A} \kappa(a) = 1$. 
We denote by $\dist(A)$ the set of probability distributions on $A$.

%We consider games on a graph whose state space is defined by 
%valuations to a st of  Boolean variables 
%$x_1, x_2, \ldots,x_n$.
%Let $L$ be the set of valuations to the Boolean variables.
%We refer to an element $\ell$ in $L$ as a \emph{location}.

\smallskip\noindent{\em Game structures with probabilistic uncertainty.}
A game structure with probabilistic uncertainty consists of a tuple 
$\mathcal{G} = (L, \Sigma_I,\Sigma_O,\Delta,\un)$, where
(a)~$L$ is a set of \emph{locations}; 
(b)~$\Sigma_I$ and $\Sigma_O$ are two sets
of input and output alphabets, respectively;
(c)~$\Delta: L \times \Sigma_I \times \Sigma_O \to \dist(L)$ is a 
probabilistic transition function that given a location, an input 
and an output letter gives the probability distribution over the next 
locations; and 
(d)~$\un: L \to \dist(L)$ is the \emph{probabilistic uncertainty function} 
that given the true current location describes the probability distribution 
of the observed location. 
If $\un$ is the identity function we obtain perfect-observation games.

Intuitively, a game proceeds as follows.
The game starts at some location $\ell\in L$.
Player~1 observes a state drawn from the distribution $\un(\ell)$,
which represents a potentially faulty observation process.
Intuitively, at every step the player can observe the value of all 
variables that corresponds to the state of the game, but there is a probability 
that the observed value of some variables is incorrect.
Player~2 observes the ``correct'' state $\ell$.
Given the observation of the history of the game so far, 
Player~1 picks an input alphabet $\sigma^i\in \Sigma_i$.
Player~2 then picks an output letter $\sigma^o\in\Sigma_o$: we consider two 
variants, (1)~Player~2 only observes the history of correct locations and the moves
of the players; and 
(2)~Player~2 observes the history of correct locations, the moves of the 
players, and also observes the history of observed locations of Player~1.
The state of the game is updated to $\ell'$ with probability $\Delta(\ell,
\sigma^i,\sigma^o)(\ell')$.
This process is repeated ad infinitum.

\smallskip\noindent{\em Plays.}
A \emph{play} of $\mathcal{G}$ is a sequence
$\rho = \ell_0 \sigma_0^i \sigma_0^o \ell_1 \sigma_1^i \sigma_1^o \ldots$
of locations, input letter, and output letter,
such that for all $j\geq 0$ we have $\Delta(\ell_j,\sigma^i_j,
\sigma^o_j)(\ell_{j+1}) > 0$.
The \emph{prefix up to $\ell_n$} of the play~$\rho$ is denoted by $\rho(n)$, 
its \emph{length} is $\abs{\rho(n)} = n+1$ and its \emph{last element} is 
$\Last(\rho(n)) = \ell_n$. The set of plays in $\mathcal{G}$ is denoted by $\Play(\mathcal{G})$,
and the set of corresponding finite prefixes is denoted
$\Pref(\mathcal{G})$.
%An {\em observation sequence} $\trace(\rho) = \ell_0\ell_1\ldots \in
%L^\omega$ is obtained from a play
%$\rho = \ell_0\sigma^i_0\sigma^o_0\ldots$ by projecting on to $L$.

\smallskip\noindent{\em Strategies.}
A strategy for Player~1 observes the finite prefix of a play and then
selects an input letter (pure strategies) or a probability distribution over
input letters in $\Sigma_i$.
Formally, a pure strategy for Player~1 is a function $\straa :
\Pref(\mathcal{G})\to \Sigma_i$, and a randomized strategy for Player~1 is
a function $\straa: \Pref(\mathcal{G})\to \dist(\Sigma_i)$.
Similarly, pure and randomized strategies for Player~2 are defined as functions
$\strab : \Pref(\mathcal{G}) \times \Sigma_i \to \Sigma_o$ and
$\strab : \Pref(\mathcal{G}) \times \Sigma_i \to \dist(\Sigma_o)$,
respectively.
Note that Player~2 sees Player~1's choice of input action at each step.
In the case where Player~2 observes also the history of observed locations, the pure 
and randomized strategies are defined as functions
$\strab : \Pref(\mathcal{G}) \times \Pref(\mathcal{G}) \times \Sigma_i \to \Sigma_o$ and
$\strab : \Pref(\mathcal{G})  \times \Pref(\mathcal{G}) \times \Sigma_i \to \dist(\Sigma_o)$,
respectively, where the output letter is chosen based on the original history and 
observed history.
We refer to strategies that observes both histories as ``all-powerful'' strategies 
for Player~2.

\smallskip\noindent{\em Outcomes.}
The \emph{outcome} of two randomized strategies $\straa$ for Player~1
and $\strab$ for Player~2 from a location $\ell\in L$ is the set of plays
$\rho = \ell_0 \sigma^i_0\sigma^o_0\ldots$ such that
(1) $\ell = \ell_0$, 
(2) there exists a sequence $\ell_0' \ell_1'\ldots$ such that
$\un(\ell_j)(\ell_j') > 0$ for each $j \geq 0$,
(3) for each $j\geq 0$, we have
$\straa(\ell_0'\sigma^i_0\sigma^o_0\ldots \ell_j')(\sigma^i_j) > 0$
and $\strab(\rho(j), \sigma^i_j)(\sigma^o_j) > 0$ 
(if $\strab$ is an all-powerful strategy, then 
$\strab(\rho(j), \ell_0' \sigma^i_0 \sigma^o_0 \ell_1' \ldots \ell_j', \sigma^i_j)(\sigma^o_j) > 0$), 
and $\Delta(\ell_j, \sigma^i_j,\sigma^o_j)(\ell_{j+1}) > 0$.
The primed sequence $\ell_0'\ell_1'\ldots$ gives the sequence of
observations made by Player~1 using the probabilistic uncertainty function.
Note that this sequence may be incorrect with some 
probability due to probabilistic uncertainty in the observation.
We denote this set of plays as $\outcome(\mathcal{G},\ell,
\straa,\strab)$.
The outcome of two pure strategies is defined analogously, considering
pure strategies as degenerate randomized strategies which pick a
letter with probability one.
The \emph{outcome set} of the pure (resp.\ randomized)
strategy $\straa$ for Player~$1$ in $\mathcal{G}$ is the set
$\Outcome_1(\mathcal{G},\ell,\straa)$ of plays $\rho$ such that there exists a
pure (resp.\ randomized) strategy $\strab$ for
Player~$2$ with $\rho\in\outcome(\mathcal{G},\ell,\straa,\strab)$.
The outcome set $\Outcome_2(\mathcal{G},\ell,\strab)$ for Player~2 is defined symmetrically.

%%Strategies for the adversary (Player~2) who chooses the output letter
%%are analogous, but Player~2 observes the correct sequence of locations
%%as well as Player~1's latest choice of the input letter.

\newcommand{\ActMt}{\mathsf{ActMt}}

\smallskip\noindent{\em Probability measure.} Given strategies $\straa$ and 
$\strab$, we define the probability measure $\Prb_{\ell_0}^{\straa,\strab}(\cdot)$.
The definition of the probability measure is subtle and non-standard as the 
prefix that Player~1 observes can be completely different from the original 
history.
For a finite prefix $\rho\in \Pref(\mathcal{G})$, let $\Cone(\rho)$ denote
the set of plays with $\rho$ as prefix.
We will define $\Prb_{\ell_0}^{\straa,\strab}(\cdot)$ for cones, and then by 
Caratheodory extension theorem~\cite{Billingsley} there is a unique extension to all 
measurable sets of paths.
To define the probability measure we also need to define 
a function $\ObsSeq(\rho)$, that given a finite prefix $\rho$, gives the probability 
distribution over finite prefixes $\rho'$, such that $\ObsSeq(\rho)(\rho')$ denotes the 
probability of observing $\rho'$ given the correct prefix is $\rho$.
The base case is as follows:
\[
\Prb_{\ell_0}^{\straa,\strab}(\Cone(\ell_0))=1; \qquad
\ObsSeq(\ell_0)(\ell')=\un(\ell_0)(\ell').
\]
The inductive definition of $\ObsSeq$ is as follows: for a prefix $\rho$ of length $n+1$
\[
\ObsSeq(\rho \sigma^i_n \sigma^o_n \ell_{n+1})(\rho' \sigma^i_n \sigma^o_n \ell_{n+1}')=
\ObsSeq(\rho)(\rho')\cdot \un(\ell_{n+1})(\ell_{n+1}')
\]
Given a sequence $\rho=\ell_0 \sigma^i_0 \sigma^o_0 \ell_1 \sigma^i_1 \sigma^o_1 \ldots \ell_n$,
we define $\ActMt(\rho)=\set{\wt{\rho}=  
\wt{\ell}_0 \wt{\sigma}^i_0 \wt{\sigma}^o_0 \ell_1 \wt{\sigma}^i_1 \wt{\sigma}^o_1 \ldots \wt{\ell}_n
\mid \forall 1 \leq j \leq n-1. \ \wt{\sigma}^i_j= \sigma^i_j \text{ and } 
 \wt{\sigma}^o_j =\sigma^o_j }$ 
the sequences of same length as $\rho$ such that the sequence of input and 
output letter matches (i.e., the set of action-matching prefixes).
Note that for non action-matching prefixes the observation sequence function 
always assigns probability zero. 
The inductive case for the probability measure is as follows: for a prefix $\rho$ of length $n+1$ with last state $\ell_n$, 
we have 
\[
\begin{array}{l}
\Prb_{\ell_0}^{\straa,\strab}(\Cone(\rho \sigma^i_n\sigma^o_n \ell_{n+1}))
= \\
\displaystyle
\ \ \Prb_{\ell_0}^{\straa,\strab}(\Cone(\rho)) \cdot \bigg( \sum_{\rho'\in \ActMt(\rho)} 
\ObsSeq(\rho)(\rho') \cdot \straa(\rho')(\sigma^i_n) \cdot \strab(\rho \sigma^i_n)(\sigma^o_n) \cdot 
\Delta(\ell_n, \sigma^i_n,\sigma^o_n)(\ell_{n+1})\bigg);
\end{array}
\]
i.e., $\ObsSeq(\rho)(\rho')$ gives the probability to observe 
$\rho'$, then $\straa(\rho')(\sigma^i_n)$ denotes the probability to play 
$\sigma_n^i$ given the strategy and observed sequence $\rho'$, and since 
Player~2 observes the correct sequence the probability to play $\sigma_n^o$
is given by $\strab(\rho \sigma_n^i)(\sigma_n^o)$ (Player~2 observes $\rho$),
and the final term $\Delta(\ell_n,\sigma_n^i,\sigma_n^o)(\ell_{n+1})$ gives
the transition probability.
If $\strab$ is an all-powerful strategy, then $\strab$ observes both the correct history 
$\rho$ and the observed history $\rho'$, and then the definition is as follows:
\[
\begin{array}{l}
\Prb_{\ell_0}^{\straa,\strab}(\Cone(\rho \sigma^i_n\sigma^0_n \ell_{n+1}))
= \\
\displaystyle
\ \ \Prb_{\ell_0}^{\straa,\strab}(\Cone(\rho)) \cdot \bigg( \sum_{\rho'\in \ActMt(\rho)} 
\ObsSeq(\rho)(\rho') \cdot \straa(\rho')(\sigma^i_n) \cdot \strab(\rho,\rho',\sigma^i_n)(\sigma^o_n) \cdot 
\Delta(\ell_n, \sigma^i_n,\sigma^o_n)(\ell_{n+1})\bigg).
\end{array}
\]

\smallskip\noindent{\em Winning objectives.}
An \emph{objective} for Player~$1$ in $\mathcal{G}$ is a set $\phi \subseteq \Play(\mathcal{G})$
of plays.
A play $\rho \in \Play(\mathcal{G})$ 
\emph{satisfies} the objective $\phi$, denoted $\rho \models \phi$, if $\rho \in \phi$.
% We consider objectives that are Borel measurable: 
% a Borel objective is a Borel set in the Cantor topology on $S^\omega$~\cite{Kechris}.
We consider $\omega$-regular objectives 
specified as parity objectives (a canonical form to express all $\omega$-regular 
objectives~\cite{Thomas97}).
For a play $\rho = \ell_0 \sigma^i_0\sigma^o_0\ldots$,
we denote by $\rho_k$ the $k$-th 
location $\ell_k$ of the play and denote by $\Inf(\rho)$ 
the set of locations that 
occur infinitely often in $\rho$, that is, 
$\Inf(\rho)=\{ \ell \mid \forall i \exists j: j> i \text{ and } \ell_j=\ell\}$.
We consider the following classes of objectives.

\begin{enumerate}
\item \emph{Reachability and safety objectives.}
Given a set $\target \subseteq L$ of target locations, the \emph{reachability} objective 
$\Reach(\target)$ requires that a location in $\target$ be visited at least once, 
that is, $\Reach(\target)=\set{ \rho  \mid \exists k \geq 0 \cdot \rho_k \in \target}$. 
Dually, the \emph{safety} objective $\Safe(\target)$ requires that only states 
in $\target$ be visited.
Formally,
$\Safe(\target)=\set{ \rho \mid \forall k \geq 0 \cdot \rho_k \in \target}$.

\item \emph{B\"uchi and coB\"uchi objectives.}
Let $\target\subseteq L$ be a set of target locations.
The \emph{B\"uchi} objective $\Buchi(\target)$ 
requires that a state in $\target$ be visited infinitely often, 
that is,
$\Buchi(\target)=\set{ \rho \mid \Inf(\rho) \cap \target \neq \emptyset}$.
Dually, the \emph{coB\"uchi} objective $\coBuchi(\target)$ requires that 
only states in $\target$ be visited infinitely often.
Formally, $\coBuchi(\target) =\set{\rho \mid \Inf(\rho) \subseteq \target}$.

\item \emph{Parity objectives.} 
For $d \in \Nats$, let $p:L \to \{0,1,\ldots,d\}$ be a 
\emph{priority function}, 
which maps each state to a nonnegative integer priority.
The \emph{parity} objective $\Parity(p)$ requires that the minimum priority 
that occurs infinitely often be even.
Formally, $\Parity(p)=\set{\rho \mid \min\set{ p(\ell) \mid \ell \in \Inf(\rho)} 
\mbox{ is even} }$.
The B\"uchi and coB\"uchi objectives are the special cases 
of parity objectives with two priorities, $p: L \to \set{0,1}$ and 
$p: L \to \set{1,2}$, respectively.

%%\item \emph{Visible objectives.}
%%We say that an objective $\phi$ is \emph{visible} for Player~$i$ if for all $\rho,\rho' \in S^{\omega}$,
%%if $\rho \models \phi$ and $\obs_i(\rho)=\obs_i(\rho')$, then $\rho'\models \phi$.
%%For example if the priority function maps observations to priorities 
%%(i.e., $p: \Obs_i \to \set{0,1,\ldots,d}$), then the parity objective is 
%%visible for Player~$i$.

\end{enumerate}

\smallskip\noindent{\em Sure, almost-sure and positive winning.}
An \emph{event} is a measurable set of plays, and 
given strategies $\straa$ and $\strab$ for the two players, 
the probabilities of events are uniquely defined.
For an objective~$\phi$, assumed to be Borel, we denote by 
$\Prb_{\ell}^{\straa,\strab}(\phi)$ 
the probability that $\phi$ is satisfied by the play obtained from the 
starting location  $\ell$ when the 
strategies $\straa$ and $\strab$ are used.
Given a game $\mathcal{G}$, an objective $\phi$, and a location $\ell$, 
we consider the following winning modes: 
(1)~a strategy $\straa$ for Player~1 is \emph{sure winning} for the objective 
$\phi$ from $\ell\in L$ if $\outcome(\mathcal{G},\ell,\straa,\strab) \subseteq 
\phi$ for all strategies $\strab$ for Player~$2$;
(2)~a strategy $\straa$ for Player~$1$ is 
\emph{almost-sure winning} for the objective $\phi$ from $\ell\in L$ if 
$\Prb_{\ell}^{\straa,\strab}(\phi)=1$ for all strategies $\strab$ for 
Player~$2$; and 
(3)~a strategy $\straa$ for Player~$1$ is 
\emph{positive winning} for the objective $\phi$ from $\ell\in L$ if 
$\Prb_{\ell}^{\straa,\strab}(\phi)>0$ for all strategies $\strab$ for 
Player~$2$.

Qualitative analysis of a game consists of the computation of the sure, almost-sure and
positive winning sets.
The sure (resp. almost-sure and positive) winning decision problem for an objective
consists of a game and a starting location $\ell$, and asks whether
there is a
%%the location is an observation based 
sure (resp. almost-sure and positive) winning strategy 
from $\ell$.

\section{Partial-observation Stochastic Games}\label{sec:def}

We now recall the usual definition of partial-observation games and their 
subclasses.
We focus on partial-observation turn-based probabilistic games,
where at each round one of the players is in charge of choosing the next 
action and the transition function is probabilistic.
We will present a polynomial time reduction of games with probabilistic uncertainty 
to these games.

\smallskip\noindent{\bf Partial-observation games.}
A \emph{partial-observation stochastic game} 
(for short partial-observation game or simply a \emph{game}) is a tuple 
$G=\tuple{S_1 \cup S_2,A_1,A_2,\trans_1 \cup \trans_2,\Obs_1,\Obs_2}$ with the following components: 
\begin{enumerate}
\item \emph{(State space).} $S=S_1 \cup S_2$ is a finite set of states, where $S_1 \cap S_2=\emptyset$ (i.e.,
$S_1$ and $S_2$ are disjoint), states in $S_1$ are Player~1 states, and states in $S_2$ are Player~2 states. 
\item \emph{(Actions).} $A_i$ ($i=1,2$) is a finite set of actions for Player~$i$. 
\item \emph{(Transition function).} For $i\in \set{1,2}$, the probabilistic 
transition function for Player~$i$ is the function $\trans_i : S_i \times A_i \to \dist(S_{3-i})$ 
that maps a state $s_i \in S_i$ and an action~$a_i \in A_i$ to the probability 
distribution  $\trans_i(s_i,a_i)$ over the successor states in $S_{3-i}$ (i.e., games are alternating). 
\item ~\emph{(Observations).} $\Obs_1 \subseteq 2^{S}$ is a finite set 
of observations for Player~$1$ that partitions the state space~$S$, and similarly
$\Obs_2$ is the observations for Player~2.
These partitions uniquely define functions $\obs_i: S \to \Obs_i$, for $i \in \set{1,2}$, 
that map each state to its observation such that $s \in \obs_i(s)$ for all $s \in S$.
We will also consider the special case of one-sided games, where Player~2 is perfectly 
informed (has complete observation), i.e., 
$\Obs_2=S$, and $\obs_2(s)=s$ for all $s\in S$ (i.e., the partition consists
of singleton states).
\end{enumerate}

\smallskip\noindent{\bf Special Class: POMDPs.}
We will consider one special class of partial-observation games called 
\emph{partial-observable Markov decision processes} (POMDPs), 
where the action set for Player~2 is a singleton (i.e., there is effectively only 
Player~1 and stochastic transitions). 
%and (b)~there is only one observation for all states (i.e., Player~1 does not know anything about the state and hence is blind).
Hence we will omit the action set and observation for Player~2 and represent 
a POMDP as the following tuple $G=\tuple{S, A, \trans, \Obs}$, where 
$\trans: S \times A \to \distr(S)$.

\smallskip\noindent{\em Plays.}
In a game, in each turn, for $i \in \set{1,2}$, if the current state $s$ is in $S_i$, 
then Player~$i$ chooses an action $a \in A_i$, and the successor state is 
chosen by sampling the probability distribution $\trans_i(s,a)$.
A \emph{play} in $G$ is an infinite sequence of states and actions 
$\rho=s_0 a_0 s_1 a_1 \ldots$ such that for all $j \geq 0$, if $s_j \in S_i$, 
for $i\in \set{1,2}$, then $a_j \in A_i$ such that 
$\trans_i(s_j,a_j)(s_{j+1})>0$.
The definitions of prefix and length are analogous
to the definitions in Section~\ref{sec:probimperfect}.
For $i \in \set{1,2}$, we denote by $\Pref_i(G)$ the set of finite prefixes 
in $G$ that end in a state in $S_i$.
The \emph{observation sequence} of $\rho= s_0 a_0 s_1 a_1\ldots$ for Player~$i$ ($i=1,2$) 
is the unique infinite sequence of observations and actions, i.e., 
$\obs(\rho) = o_0 a_0 o_1 a_1 o_2 \ldots$ such that $s_j \in o_j$ for all $j \geq 0$.
The observation sequence for finite sequences (prefix of plays) is defined 
analogously.

\smallskip\noindent{\em Strategies.}
A \emph{pure strategy} in $G$ for Player~$1$ is a function $\straa:\Pref_1(G) \to A_1$. 
A \emph{randomized strategy} in $G$ for Player~$1$ is a function $\straa:\Pref_1(G) \to \dist(A_1)$. 
A (pure or randomized) strategy $\straa$ for Player~$1$ is 
\emph{observation-based} if for all prefixes $\rho,\rho' \in \Pref_1(G)$, 
if $\obs(\rho)=\obs(\rho')$, then $\straa(\rho)=\straa(\rho')$. 
We omit analogous definitions of strategies for Player~$2$.
We denote by $\Straa_G$, $\Straa_G^O$, $\Straa_G^P$, $\Strab_G$, $\Strab_G^O$, $\Strab_G^P$ 
the set of all Player-$1$ strategies in~$G$, the set of all observation-based Player-$1$ strategies, 
the set of all pure Player-$1$ strategies, 
the set of all Player-$2$ strategies in~$G$, the set of all observation-based Player-$2$ strategies, 
and the set of all pure Player-$2$ strategies, 
respectively.
In the setting where Player~$1$ has partial-observation and Player~$2$ 
has complete observation, the set $\Strab_G$ of all strategies coincides with 
the set $\Strab_G^O$ of all observation-based strategies.
We will require the players to play observation-based strategies.

\smallskip\noindent{\em Outcomes.}
The \emph{outcome} of two randomized strategies $\straa$ (for
Player~$1$) and $\strab$ (for Player~$2$) from a state $s$ in $G$ is the set of 
plays $\rho=s_0 a_0 s_1 a_1 \ldots \in \Play(G)$, with $s_0=s$, where for all $j \geq 0$, 
if $s_j \in S_1$ (resp. $s_j \in S_2$), then 
$\straa(\rho(j))(a_j) > 0$ (resp. $\strab(\rho(j))(a_j)>0$) and 
$\trans_1(s_j,a_j)(s_{j+1})>0$ (resp. $\trans_2(s_j,a_j)(s_{j+1})>0$).
This set is denoted $\outcome(G,s,\straa,\strab)$.
The outcome of two pure strategies is defined analogously by viewing  
pure strategies as randomized strategies that play their chosen action
with probability one.
The \emph{outcome set} of the pure (resp.\ randomized)
strategy $\straa$ for Player~$1$ in $G$ is the set
$\Outcome_1(G,s,\straa)$ of plays $\rho$ such that there exists a
pure (resp.\ randomized) strategy $\strab$ for
Player~$2$ with $\rho\in\outcome(G,s,\straa,\strab)$.
The outcome set $\Outcome_2(G,s,\strab)$ for Player~2 is defined symmetrically.

\smallskip\noindent{\em Probability measure.} 
We define the probability measure $\Prb_{s}^{\straa,\strab}(\cdot)$ as follows:
for a finite prefix $\rho$, let $\Cone(\rho)$ denote the set of plays
with $\rho$ as prefix.
Then we have $\Prb_{s}^{\straa,\strab}(\Cone(s))=1$, and for a prefix 
of length $n$ ending in a Player~1 state $s_n$ we have
\[
\Prb_{s}^{\straa,\strab}(\Cone(\rho a_n s_{n+1})) = 
\Prb_{s}^{\straa,\strab}(\Cone(\rho)) \cdot \straa(\rho)(a_n) \cdot \trans_1(s_n,a_n)(s_{n+1});
\] 
and the definition when $s_n$ is a Player~2 state is similar.
For a set $Q$ of finite prefixes, we write 
$\Prb_{s}^{\straa,\strab}(\Cone(Q))$ for 
$\Prb_{s}^{\straa,\strab}(\bigcup_{\rho\in Q}\Cone(\rho))$.

The winning modes sure, almost-sure, and positive are defined
analogously to Section~\ref{sec:probimperfect}, where we restrict the players
to play an observation-based strategy.
From the results of~\cite{CDHR07,BGG09,BBG08,BD08,CDH10a} we obtain  the following 
theorem summarizing the results for partial-observation games and POMDPs.

\begin{theorem}[\cite{CDHR07,BGG09,BBG08,BD08,CDH10a}]\label{thrm1}
The following assertions hold: 
\begin{enumerate}
\item \emph{(One-sided games and POMDPs).} 
The sure, almost-sure and positive winning for safety objectives;
the sure and almost-sure winning for reachability objectives and B\"uchi
objectives; the sure and positive winning for coB\"uchi objectives; and
the sure winning for parity objectives are 
EXPTIME-complete for one-sided partial-observation games (Player~2 
perfectly informed) and POMDPs.
%%SPACE-complete for blind POMDPs.
The positive winning problem for reachability objectives is PTIME-complete 
both for one-sided partial-observation games and POMDPs.

\item \emph{(General partial-observation games).}
The sure, almost-sure winning for safety objectives, 
the sure winning for parity objectives are EXPTIME-complete for partial-observation 
games; 
the almost-sure winning for reachability objectives and B\"uchi objectives, and
the positive winning for safety and coB\"uchi objectives are 2EXPTIME-complete
for partial-observation games.
The positive winning problem for reachability objectives is EXPTIME-complete.

\item \emph{(Undecidability results).}
The positive winning problem for B\"uchi objectives, the almost-sure winning
problem for coB\"uchi objectives, and the positive and almost-sure winning problems 
for parity objectives are undecidable for POMDPs.

\end{enumerate}
\end{theorem}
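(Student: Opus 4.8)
The statement is a compilation of known results, so the plan is to assemble each assertion from the constructions and hardness arguments in the cited literature, grouped by the technique that establishes it. I would not attempt a monolithic proof but instead route each (winning mode, objective, game class) triple to the paper that settles it.

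The upper bounds all proceed through subset-style constructions that convert a partial-observation game into an equivalent perfect-observation game. For sure winning I would invoke the knowledge-tracking (belief) construction of~\cite{CDHR07}: a Player-1 belief is the set of states consistent with the observation sequence seen so far, and tracking this belief yields a perfect-information parity game of exponential size, solved in the usual way, giving the EXPTIME upper bounds for sure winning across safety, reachability, B\"uchi, coB\"uchi, and parity. For the one-sided case and POMDPs the same single belief space suffices for almost-sure and positive winning as well, since with Player~2 perfectly informed the one belief of Player~1 captures all relevant information; this gives the EXPTIME upper bounds of part~(1). In the general two-sided case, by contrast, almost-sure reachability/B\"uchi and positive safety/coB\"uchi require tracking a \emph{set} of beliefs (a ``belief of beliefs''), producing a double-exponential blowup, and the matching 2EXPTIME upper and lower bounds come from~\cite{CDHR07,BGG09}.

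For the lower bounds I would cite reductions from the membership problem for alternating polynomial-space Turing machines (equivalently, from succinctly presented perfect-information games) to obtain EXPTIME-hardness~\cite{CDHR07}, and the dedicated 2EXPTIME-hardness constructions of~\cite{BGG09} for the randomized winning modes in the general case. The lone PTIME result---positive reachability for one-sided games and POMDPs---is separated out because a positive winning strategy need only guarantee a single reachable path, so partial observation does not help the adversary block reachability; this collapses to an AND-OR graph reachability computation (PTIME membership), with PTIME-hardness inherited from that problem.

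The undecidability results in part~(3) are the genuinely hard content, and I would derive them from the undecidability of the emptiness and value problems for probabilistic finite automata. Following~\cite{BBG08,CDH10a}, one encodes such an automaton as a POMDP so that positive B\"uchi winning, respectively almost-sure coB\"uchi winning, holds if and only if the corresponding undecidable probabilistic-automaton problem does; the positive and almost-sure parity cases then follow immediately since parity subsumes both B\"uchi and coB\"uchi. The main obstacle in assembling the theorem is not any single construction but the bookkeeping: one must align each winning mode (sure vs.\ almost-sure vs.\ positive) and each objective with the correct complexity class, because the boundaries are subtle---almost-sure reachability jumps from EXPTIME in the one-sided case to 2EXPTIME in the two-sided case, and positive B\"uchi drops from decidable (sure) to undecidable---so keeping every upper bound matched to its lower bound is exactly where the care is required.
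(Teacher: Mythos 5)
Your proposal is correct and matches the paper's treatment: the paper offers no proof of Theorem~\ref{thrm1} at all, stating it purely as a summary of the cited results~\cite{CDHR07,BGG09,BBG08,BD08,CDH10a}, which is exactly the assembly-by-citation you carry out. Your routing of each (winning mode, objective, game class) triple to the appropriate source---belief construction for sure winning, belief-of-beliefs with 2EXPTIME bounds for the general randomized modes, and probabilistic-automata undecidability for positive B\"uchi and almost-sure coB\"uchi, with parity subsuming both---is faithful to what those references establish.
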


\section{Reduction: Games with Probabilistic Uncertainty to Partial-observation Games}

We now present a reduction of games with probabilistic uncertainty to 
classical partial-observation games.
Let $G=(L,\Sigma_I,\Sigma_O,\Delta,\un)$ be a game with probabilistic 
uncertainty and we construct a partial-observation game $H= 
(L \times L \cup L \times L \times \Sigma_I, A_1=\Sigma_I, A_2= \Sigma_O, 
\trans=\trans_1 \cup \trans_2, \Obs_1,\Obs_2)$ as follows (below as $\trans_1$
and $\trans_2$ would be clear from context, we simply use $\trans$ for
simplicity):
\begin{enumerate}
\item The transition function $\trans_1$ is deterministic and 
for $(\ell_1,\ell_2) \in L\times L$ and $\sigma_I \in \Sigma_I$ 
we have 
\[
\trans((\ell_1,\ell_2),\sigma_I)=(\ell_1,\ell_2,\sigma_I)
\]

\item The transition function $\trans_2$ captures both 
$\Delta$ and $\un$ and is defined as follows: for $(\ell_1,\ell_2,\sigma_I) \in 
L\times L \times \Sigma_I$ and $\sigma_O \in \Sigma_O$ 
we have 
\[
\trans((\ell_1,\ell_2,\sigma_I),\sigma_O)(\ell_1',\ell_2')
=\Delta(\ell_1,\sigma_I,\sigma_O)(\ell_1') \cdot \un(\ell_1')(\ell_2').
\]
Intuitively, the first component of the game $H$ keeps track of the 
real state of the game $G$, and the second component 
keeps track of the information available from probabilistic 
uncertainty.
Hence Player~1 is only allowed to observe the second component 
which is the probability distribution over the observable state 
given the current state.

\item The observation mapping is as follows: we have $\Obs_1=L$; and 
$\obs_1(\ell_1,\ell_2)=\obs_1(\ell_1,\ell_2,\sigma_I)=\ell_2$,
i.e., only the second component is observable.
We will consider two cases for $\Obs_2$: for the reduction of all-powerful 
strategies we will consider Player~2 has complete-observation, 
and in the other case we have $\Obs_2=L$ and Player~2 observes the first component
that represents the correct history: i.e.,
$\obs_2(\ell_1,\ell_2)=\obs_2(\ell_1,\ell_2,\sigma_I)=\ell_1$.

\item For a parity objective in $G$ given by priority function 
$p_G: L \to \set{0,1,\ldots, d}$, we consider the priority function
$p_H$ in $H$ as follows:
$p_H((\ell,\ell'))=p_H((\ell,\ell',\sigma_I))=p_G(\ell)$, for all 
$\ell,\ell'\in L$ and $\sigma_I \in \Sigma_I$.

\end{enumerate}

\smallskip\noindent{\bf Correspondence of strategies.} 
We will now establish the correspondence of 
probabilistic uncertain strategies in $G$ and the observation
based strategies in $H$.
We present a few notations.
For simplicity of presentation, we will use a slight abuse of notation:
given a history (or finite prefix) 
$\rho_H=s_0 a_0 s_1 a_1 s_2 a_2\ldots s_{2n}$ in $H$ we will represent 
the history as 
$s_0 a_0 a_1 s_2 a_2 a_3 s_3 \ldots s_{2n}$ as the intermediate state is always 
uniquely defined by the state and the action.
Intuitively this is removing the stuttering and does not affect parity objectives.

\smallskip\noindent{\em Mapping of strategies from $G$ to $H$.} 
Given a history $\rho_H= s_0 a_0 a_1 s_2 a_2 a_3 s_3 \ldots s_{2n}$ in $H$, 
such that $s_{2i}=(\ell_{2i}^1,\ell_{2i}^2)$, we consider two histories 
in $G$ as follows: 
\[
g_1(\rho_H)= \ell_0^1 a_0 a_1 \ell_2^1 a_2 a_3 \ldots \ell_{2n}^1; \qquad
g_2(\rho_H)= \ell_0^2 a_0 a_1 \ell_2^2 a_2 a_3 \ldots \ell_{2n}^2.
\]
Intuitively, $g_1$ gives the first component (which is the correct history) and $g_2$ gives
the second component (which is the observed history).
We now define the mapping of strategies from $G$ to $H$: 
given strategy $\straa_G$ for Player~1, a strategy $\strab_G$ for Player~2, 
and  an all-powerful strategy $\strab_G^A$ for Player~2, in the 
game $G$, we define the corresponding strategies in $H$ as follows: 
for a history $\rho_H$ and an action $a_i$ for Player~1
we have 
\[
\begin{array}{rcl}
\straa_H(\rho_H) & = & \straa_G(g_2(\rho_H)); \\[2ex]
\strab_H(\rho_H \ a_i) & = & \strab_G(g_1(\rho_H) \ a_i); \\[2ex]
\strab_H^C(\rho_H\ a_i) & = & \strab_G^A(g_1(\rho_H), g_2(\rho_H),a_i). 
\end{array}
\]
Note that $\straa_H$ and $\strab_H$ are observation-based strategies, and 
$\strab_H^C$ is a strategy with complete-observation,
i.e., all-powerful strategies are mapped to complete-observation strategies.
Hence for all-powerful strategies the reduction is to one-sided games.
We will use $\wh{g}$ to denote the mapping of strategies, i.e.,
$\straa_H=\wh{g}(\straa_G)$, $\strab_H=\wh{g}(\strab_G)$, and
$\strab_H^C=\wh{g}(\strab_G^A)$.

\smallskip\noindent{\em Mapping of strategies from $H$ to $G$.} We now present 
the mapping in the other direction. 
Let 
$\rho_G^1=\ell_0^1 \sigma_0^i \sigma_0^o \ell_1^1 \sigma_1^i \sigma_1^o \ldots 
\ell_n^1$, and
$\rho_G^2=\ell_0^2 \sigma_0^i \sigma_0^o \ell_1^2 \sigma_1^i \sigma_1^o \ldots 
\ell_n^2$ be two prefixes in $G$.
Intuitively, the first represent the correct history and the second the 
observed history.
Then we consider the following set of histories in $H$:
\[
h_1(\rho_G^1)= \set{\rho_H \mid g_1(\rho_H)=\rho_G^1}; \qquad
h_2(\rho_G^2)= \set{\rho_H \mid g_2(\rho_H)=\rho_G^2}; 
\]
and 
\[
h_{12}(\rho_G^1,\rho_G^2)= 
(\ell_0^1,\ell_0^2) \sigma_0^i \sigma_0^o (\ell_1^1,\ell_1^2) 
\sigma_1^i \sigma_1^o \ldots (\ell_n^1,\ell_n^2).
\]
We now define the mapping of strategies.
Given an observation-based strategy $\straa_H \in \Straa_H^O$ for Player~1, 
observation-based strategy $\strab_H\in \Strab_H^O$ for Player~2, and 
complete observation-based strategy $\strab_H^C \in \Strab_H$, we define the 
following strategies in $G$: for a correct history $\rho_G^1$, 
observed history $\rho_G^2$, and input $\sigma^i$ 
we have
\[
\begin{array}{rcl}
\strab_G(\rho_G^1 \ \sigma^i) & = & 
\strab_H(\rho_H \ \sigma^i); \quad \rho_H\in h_1(\rho_G^1); \\[2ex]
\straa_G(\rho_G^2) & = & \straa_H(\rho_H); \qquad  \rho_H\in h_2(\rho_G^2); \\[2ex]
\strab_G^A(\rho_G^1,\rho_G^2,\sigma^i) & = &
\strab_H^C(h_{12}(\rho_G^1,\rho_G^2),\sigma_i).
\end{array}
\] 
Note that since $\strab_H$ is observation-based it plays the same for 
all $\rho_H\in h_1(\rho_G^1)$, and similarly,
since $\straa_H$ is observation-based it plays the same for 
all $\rho_H\in h_2(\rho_G^2)$.
Also observe that the strategy $\strab_G^A$ is an all-powerful strategy.
We will use $\wh{h}$ to denote the mapping of strategies, i.e.,
$\straa_G=\wh{h}(\straa_H)$, $\strab_G=\wh{h}(\strab_H)$, and
$\strab_G^A=\wh{h}(\strab_H^C)$. 

Given a starting state $\ell_0 \in G$, consider the following 
probability distribution $\mu$ in $H$: $\mu(\ell_0,\ell)=\un(\ell_0)(\ell)$.
Given the mapping of strategies, our goal is to establish the equivalences of
the probability measure. 
We introduce some notations required to establish the equivalence. 
For $j\geq 0$, we denote by $(\tau_j^1,\tau_j^2)$ the pair of random variables
to denote the $j$-th Player~1 state of the game $H$, and by 
$\theta_j^i$ and $\theta_j^o$ the random variables for the actions following
the $j$-th state. 
Our first lemma establishes a connection of the probability of observing 
the second component in $H$ given the first component along with function 
$\ObsSeq$.
We introduce notations to define two events: given two prefixes 
$\rho_G^1=\ell_0^1 \sigma_0^i \sigma_0^o \ell_1^1 \sigma_1^i \sigma_1^o \ldots 
\ell_n^1$, and
$\rho_G^2=\ell_0^2 \sigma_0^i \sigma_0^o \ell_1^2 \sigma_1^i \sigma_1^o \ldots 
\ell_n^2$ in $G$, 
let $\cale_{1,2}(\rho_G^1,\rho_G^2)$ denote the event that for all 
$0 \leq j \leq n$ we have $\tau_j^1=\ell_j^1, \tau_j^2 =\ell_j^2$ and for all 
$0 \leq j \leq n-1$ we have $\theta_j^i= \sigma_j^i, \theta_j^o=\sigma_j^o$; 
and $\cale_1(\rho_G^1)$ denote the event that  
for all $0 \leq j \leq n$ we have $\tau_j^1=\ell_j^1$ and for all $0 \leq j \leq n-1$ 
we have $\theta_j^i= \sigma_j^i, \theta_j^o=\sigma_j^o$.

\begin{lemma}\label{lemm-proof1}
Let $\rho_G^1=\ell_0^1 \sigma_0^i \sigma_0^o \ell_1^1 \sigma_1^i \sigma_1^o \ldots 
\ell_n^1$, and
$\rho_G^2=\ell_0^2 \sigma_0^i \sigma_0^o \ell_1^2 \sigma_1^i \sigma_1^o \ldots 
\ell_n^2$ be two prefixes in $G$.
Then for all strategies $\straa_H$ and $\strab_H$, the probability that the second component 
sequence in $H$ is $\rho_G^2$, given the first component sequence is $\rho_G^1$ is 
$\ObsSeq(\rho_G^1)(\rho_G^2)$, i.e., formally 
\[
\Prb^{\straa_H,\strab_H}_{\mu}(\cale_{1,2}(\rho_G^1,\rho_G^2) \mid \cale_1(\rho_G^1))
=\ObsSeq(\rho_G^1)(\rho_G^2). 
\]
\end{lemma}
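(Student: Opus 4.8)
The plan is to prove the identity by induction on the common length $n+1$ of the two prefixes, directly unfolding the non-standard probability measure of $H$ and matching it against the inductive definition of $\ObsSeq$. Two structural features of the reduction drive the argument. First, the Player-2 transition factors as $\trans_2((\ell_1,\ell_2,\sigma_I),\sigma_O)(\ell_1',\ell_2')=\Delta(\ell_1,\sigma_I,\sigma_O)(\ell_1')\cdot\un(\ell_1')(\ell_2')$, so each step splits into a ``dynamics'' factor touching only the first component and an ``uncertainty'' factor $\un(\ell_1')(\ell_2')$ that freshly samples the new second component from the new first component. Second, $\straa_H$ is observation-based and reads only the second component, while $\strab_H$ is observation-based and reads only the first component; hence on a fixed cone their values are determined by $g_2$ and $g_1$ of the history, respectively.

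For the base case $n=0$ the events are $\cale_{1,2}(\ell_0^1,\ell_0^2)=\{\tau_0^1=\ell_0^1,\tau_0^2=\ell_0^2\}$ and $\cale_1(\ell_0^1)=\{\tau_0^1=\ell_0^1\}$. Since the initial distribution is $\mu(\ell_0,\ell)=\un(\ell_0)(\ell)$, I get $\Prb_{\mu}^{\straa_H,\strab_H}(\cale_{1,2})=\un(\ell_0^1)(\ell_0^2)$ and $\Prb_{\mu}^{\straa_H,\strab_H}(\cale_1)=\sum_{\ell'}\un(\ell_0^1)(\ell')=1$, so the conditional probability is $\un(\ell_0^1)(\ell_0^2)=\ObsSeq(\ell_0^1)(\ell_0^2)$, which is exactly the base clause of $\ObsSeq$.

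For the inductive step I would write $\rho_G^1=\ov{\rho}_G^1\,\sigma_n^i\sigma_n^o\ell_{n+1}^1$ and $\rho_G^2=\ov{\rho}_G^2\,\sigma_n^i\sigma_n^o\ell_{n+1}^2$ and unfold both probabilities by one step of the $H$-measure. The event $\cale_{1,2}(\rho_G^1,\rho_G^2)$ is a single cone, so its probability is $\Prb_{\mu}^{\straa_H,\strab_H}(\cale_{1,2}(\ov{\rho}_G^1,\ov{\rho}_G^2))$ times the one-step weight $\straa_H(\ov{\rho}_G^2)(\sigma_n^i)\cdot\strab_H(\ov{\rho}_G^1\,\sigma_n^i)(\sigma_n^o)\cdot\Delta(\ell_n^1,\sigma_n^i,\sigma_n^o)(\ell_{n+1}^1)\cdot\un(\ell_{n+1}^1)(\ell_{n+1}^2)$, using determinism of $\trans_1$ and the observation-based property to evaluate each strategy on the appropriate projection. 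The event $\cale_1(\rho_G^1)$ instead leaves every second component free, so its probability is the sum of the analogous unfoldings over all second-component histories. Here the Player-2 factor $\strab_H(\ov{\rho}_G^1\,\sigma_n^i)(\sigma_n^o)$ and the dynamics factor $\Delta(\ell_n^1,\sigma_n^i,\sigma_n^o)(\ell_{n+1}^1)$ depend only on the first component and the fixed actions, hence are constant across the sum and cancel against the same factors in the numerator, while the final uncertainty factor sums out via $\sum_{\ell'}\un(\ell_{n+1}^1)(\ell')=1$. Feeding the induction hypothesis into the remaining ratio $\Prb_{\mu}^{\straa_H,\strab_H}(\cale_{1,2}(\ov{\rho}_G^1,\cdot))/\Prb_{\mu}^{\straa_H,\strab_H}(\cale_1(\ov{\rho}_G^1))$ should leave precisely the factor $\un(\ell_{n+1}^1)(\ell_{n+1}^2)$, which by the inductive clause of $\ObsSeq$ is exactly $\ObsSeq(\rho_G^1)(\rho_G^2)/\ObsSeq(\ov{\rho}_G^1)(\ov{\rho}_G^2)$, closing the induction.

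The main obstacle is the Player-1 factor $\straa_H(\cdot)(\sigma_n^i)$: unlike the Player-2 and dynamics factors it reads the \emph{second} component and therefore varies across the summation that defines $\Prb_{\mu}^{\straa_H,\strab_H}(\cale_1(\rho_G^1))$, so it does not cancel by inspection. This is exactly the point where the non-standard, observation-first definition of the measure matters: the observed prefix $\rho'$ is weighted by $\ObsSeq$ \emph{before} $\straa$ is applied to it, and the delicate part of the proof is to show that, once the induction hypothesis rewrites $\Prb_{\mu}^{\straa_H,\strab_H}(\cale_{1,2}(\ov{\rho}_G^1,\rho'))$ as $\Prb_{\mu}^{\straa_H,\strab_H}(\cale_1(\ov{\rho}_G^1))\cdot\ObsSeq(\ov{\rho}_G^1)(\rho')$, the Player-1 factor enters the numerator and each denominator summand in a way that leaves the $\un$-product undisturbed. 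I would isolate this as a separate sub-claim about how $\ObsSeq$ weights the action-matching second-component histories $\ActMt(\ov{\rho}_G^1)$ across a one-step extension, and I expect this interaction between the Player-1 strategy and the observation weights — rather than the routine unfolding of the measure — to be the crux of the whole lemma.
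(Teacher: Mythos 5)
Your setup — the induction on length, the single-cone decomposition of $\cale_{1,2}(\rho_G^1,\rho_G^2)$, and the representation of $\cale_1(\rho_G^1)$ as a sum over free second-component histories — is the same skeleton as the paper's proof, and your base case is identical. But your proposal stops exactly where the lemma's content begins: the sub-claim you defer (that the Player-1 factors $\straa_H(\cdot)(\sigma^i_j)$ interact with the $\ObsSeq$-weights so as to leave the $\un$-product undisturbed) is not just the crux, it is false, and with it the lemma as stated. Conditioning on $\cale_1(\rho_G^1)$ conditions not only on the first components but also on the actions played, and under an observation-based $\straa_H$ the actions carry information about what was observed. Concretely: take $L=\set{s,x_a,x_b,t}$, $\Sigma_I=\set{a,b}$, $\Sigma_O=\set{\bot}$, every $\Delta(\ell,\sigma,\bot)$ Dirac on $t$, $\un(s)$ uniform on $\set{x_a,x_b}$ and $\un$ Dirac elsewhere, and let $\straa_H$ be the observation-based strategy that plays $a$ after observing $x_a$ and $b$ after observing $x_b$. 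For $\rho_G^1=s\,a\,\bot\,t$ and $\rho_G^2=x_b\,a\,\bot\,t$ we have $\ObsSeq(\rho_G^1)(\rho_G^2)=\un(s)(x_b)\cdot\un(t)(t)=1/2$, yet $\Prb^{\straa_H,\strab_H}_{\mu}(\cale_{1,2}(\rho_G^1,\rho_G^2))=\mu(s,x_b)\cdot\straa_H((s,x_b))(a)\cdots=0$, while $\Prb^{\straa_H,\strab_H}_{\mu}(\cale_1(\rho_G^1))=\mu(s,x_a)\cdot\straa_H((s,x_a))(a)\cdot\Delta(s,a,\bot)(t)=1/2>0$; so the conditional probability is $0\neq 1/2$. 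The identity your sub-claim would need — that $\sum_{\rho'\in\ActMt(\rho_G^1)}\ObsSeq(\rho_G^1)(\rho')\prod_j\straa_H(\cdot)(\sigma^i_j)$ equals the single product evaluated at $\rho_G^2$ — holds only when the product of Player-1 action probabilities is the same for every observed history in the support of $\ObsSeq(\rho_G^1)$, which is precisely what a genuinely observation-dependent $\straa_H$ (including the mapped strategies $\wh{g}(\straa_G)$ used later in the paper) violates.

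For comparison: the paper's own proof runs the same induction but never meets your obstacle, because it evaluates the one-step conditional probability $\Prb^{\straa_H,\strab_H}_{\mu}(\ov{\cale}^1_{n+1}\mid\ov{\cale}^2_{n+1})$ ``by definition'' as a ratio of transition probabilities alone — numerator $\delta((\ell^1_n,\ell^2_n,\sigma^i_n),\sigma^o_n)(\ell^1_{n+1},\ell^2_{n+1})$, denominator $\sum_{\wt{\ell}^2_n,\wt{\ell}^2_{n+1}}\delta((\ell^1_n,\wt{\ell}^2_n,\sigma^i_n),\sigma^o_n)(\ell^1_{n+1},\wt{\ell}^2_{n+1})$ — with no strategy terms at all, and then cancels using the independence of $\delta$ from $\wt{\ell}^2_n$; its subsequent chain-rule factorization of $\Prb(\cale^1_{n+1}\mid\cale^2_{n+1})$ through that one-step event also presupposes a conditional independence that fails for the same reason. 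In other words, the paper silently performs exactly the cancellation you (rightly) refused to take for granted, so your diagnosis of the difficulty is sharper than the published argument. Nevertheless, judged as a proof attempt, your proposal has a genuine, unfillable gap: the missing sub-claim is the whole lemma, and no completion exists without weakening the statement (e.g., restricting to Player-1 strategies whose action probabilities do not depend on the observed history) or changing the definition of the measure in $G$ so that past observations are not resampled independently of the actions already played.
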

\begin{proof}
The proof is by induction on the length of the prefixes. 
The base case is as follows: let the length of prefixes $\rho_G^1$ and $\rho_G^2$ be 1, 
with $\rho_G^1 = \ell_0$ and $\rho_G^2=\ell$.
Then we have
\[
\ObsSeq(\ell_0)(\ell) = \mu(\ell_{0},\ell);
\]
as required.
We now consider the inductive case: we consider prefixes 
$\rho_G^1 \sigma_n^i \sigma_n^o \ell_{n+1}^1$ and 
$\rho_G^2 \sigma_n^i \sigma_n^o \ell_{n+1}^2$. 
Let us consider the events 
$\cale_{n+1}^1=\cale_{1,2}(\rho_G^1 \sigma_n^i \sigma_n^o \ell_{n+1}^1,
\rho_G^2 \sigma_n^i \sigma_n^o \ell_{n+1}^2) $ 
and
$\cale_{n+1}^2=\cale_{1}(\rho_G^1 \sigma_n^i \sigma_n^o \ell_{n+1}^1)$. 
Let $\ov{\cale}_{n+1}^1$ denote the event that 
$\tau_n^1=\ell_n^1$, $\tau_n^2=\ell_n^2$, 
$\tau_{n+1}^1=\ell_{n+1}^1, \tau_{n+1}^2=\ell_{n+1}^2, 
\theta_{n}^1=\sigma_n^i$, and $\theta_{n}^2=\sigma_{n}^o$; 
and $\ov{\cale}_{n+1}^2$ denote the event that
$\tau_n^1=\ell_n^1$, $\tau_{n+1}^1=\ell_{n+1}^1,\theta_{n}^1=\sigma_n^i$, 
and $\theta_{n}^2=\sigma_{n}^o$.
Then by definition we have 
\[
\begin{array}{rcl}
\Prb_{\mu}^{\straa_H,\strab_H}(\ov{\cale}_{n+1}^1 \mid \ov{\cale}_{n+1}^2) 
& = & 
\displaystyle
\frac{\delta((\ell_n^1,\ell_n^2, \sigma_n^i),\sigma_n^o)(\ell_{n+1}^1,\ell_{n+1}^2)}
{\sum_{\wt{\ell}_{n}^2, \wt{\ell}_{n+1}^2} 
\delta((\ell_n^1,\wt{\ell}_n^2, \sigma_n^i),\sigma_n^o)(\ell_{n+1}^1,\wt{\ell}_{n+1}^2)} \\[3.5ex] 
 & & \quad \qquad \text{(In the numerator all choices are fixed, and } \\[2ex] 
 & & \quad \qquad \text{in denominator are all possible choices 
of the second component)} \\[3.5ex]
& = & 
\displaystyle
\frac{
\Delta(\ell_n^1, \sigma_n^i,\sigma_n^o)(\ell_{n+1}^1) \cdot \un(\ell_{n+1}^1)(\ell_{n+1}^2)
}
{
\Delta(\ell_n^1, \sigma_n^i,\sigma_n^o)(\ell_{n+1}^1) \cdot 
\sum_{\wt{\ell}_{n+1}^2} 
\un(\ell_{n+1}^1)(\wt{\ell}_{n+1}^2)
} \\[3.5ex]
& = & 
\un(\ell_{n+1}^1)(\ell_{n+1}^2)  
\qquad \quad \text{(Since $\sum_{\wt{\ell}_{n+1}^2} \un(\ell_{n+1}^1)(\wt{\ell}_{n+1}^2)=1$)}
\end{array}
\] 
Note that the crucial fact used in the above proof is in the second equality and 
the fact is that for all 
$\wt{\ell}_n^2$ we have 
$\delta((\ell_n^1,\wt{\ell}_n^2, \sigma_n^i),\sigma_n^o)(\ell_{n+1}^1,\wt{\ell}_{n+1}^2)
=\Delta(\ell_n^1, \sigma_n^i,\sigma_n^o)(\ell_{n+1}) \cdot \un(\ell_{n+1})(\wt{\ell}_{n+1}^2)$
(i.e., it is independent of $\wt{\ell}_n^2$).
Hence using the above equality and inductive hypothesis we have: 
\[
\begin{array}{rcl}
\Prb^{\straa_H,\strab_H}_{\mu}(\cale_{n+1}^1 \mid \cale_{n+1}^2) 
& = & \displaystyle 
\Prb^{\straa_H,\strab_H}_{\mu}(\cale_{1,2}(\rho_G^1,\rho_G^2) \mid \cale_{1}(\rho_G^1)) \cdot 
\Prb^{\straa_H,\strab_H}_{\mu}(\ov{\cale}_{n+1}^1 \mid \ov{\cale}_{n+1}^2) \\[2ex]
& = & 
\ObsSeq(\rho_G^1)(\rho_G^2) \cdot 
\Prb^{\straa_H,\strab_H}_{\mu}(\ov{\cale}_{n+1}^1 \mid \ov{\cale}_{n+1}^2) \qquad\text{(By inductive hypothesis)} \\[2ex]
& = & 
\ObsSeq(\rho_G^1)(\rho_G^2) \cdot 
\un(\ell_{n+1}^1)(\ell_{n+1}^2) \qquad \text{(By previous equality)} \\[2ex] 
& = & \ObsSeq(\rho_G^1 \sigma_n^i \sigma_n^o \ell_{n+1}^1)(\rho_G^2 \sigma_n^i \sigma_n^o \ell_{n+1}^2) 
\end{array}
\] 
The desired result follows. 
\hfill\qed
\end{proof}

We will now establish the equivalences of the probabilities of the cones.

\begin{lemma}\label{lemm-proof2}
For all finite prefixes $\rho_G^1$ in $G$, the following assertions hold:
\begin{enumerate}
\item For all strategies $\straa_G$, $\strab_G$, $\strab_G^A$ (all-powerful), 
we have
\[
\Prb^{\straa_G,\strab_G}_{\ell_0}(\Cone(\rho_G^1)) =
\Prb^{\wh{g}(\straa_G),\wh{g}(\strab_G)}_{\mu}(\Cone(h_1(\rho_G^1))); 
\qquad
\Prb^{\straa_G,\strab_G^A}_{\ell_0}(\Cone(\rho_G^1)) =
\Prb^{\wh{g}(\straa_G),\wh{g}(\strab_G^A)}_{\mu}(\Cone(h_1(\rho_G^1))). 
\]
\item For all strategies $\straa_H$, $\strab_H$, $\strab_H^C$ (complete-observation), 
we have
\[
\Prb^{\wh{h}(\straa_H),\wh{h}(\strab_H)}_{\ell_0}(\Cone(\rho_G^1)) =
\Prb^{\straa_H,\strab_H}_{\mu}(\Cone(h_1(\rho_G^1))); 
\qquad
\Prb^{\wh{h}(\straa_H),\wh{h}(\strab_H^C)}_{\ell_0}(\Cone(\rho_G^1)) =
\Prb^{\straa_H,\strab_H^C}_{\mu}(\Cone(h_1(\rho_G^1))). 
\]
\end{enumerate}
\end{lemma}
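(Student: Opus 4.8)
The plan is to prove the first part directly by induction on the length of $\rho_G^1$, and to deduce the second part from the first by a composition-of-mappings argument. The guiding observation is that the event $\Cone(h_1(\rho_G^1))$ in $H$ is exactly the event $\cale_1(\rho_G^1)$ that the first component of $H$ follows $\rho_G^1$ with the matching actions, so that summing over all admissible second-component (observed) histories recovers the $G$-side cone probability. Lemma~\ref{lemm-proof1} is precisely the tool that lets me replace the joint first-and-second-component probability $\Prb_\mu(\cale_{1,2}(\rho_G^1,\rho_G^2))$ by the product $\Prb_\mu(\cale_1(\rho_G^1))\cdot\ObsSeq(\rho_G^1)(\rho_G^2)$, which is where the nonstandard $G$-side measure enters.

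For the base case $\rho_G^1=\ell_0$, the $G$-side probability is $1$ by definition, while on the $H$-side $\Cone(h_1(\ell_0))$ is the cone over the states $(\ell_0,\ell)$, whose initial $\mu$-probability is $\sum_{\ell}\un(\ell_0)(\ell)=1$; the two agree. For the inductive step I would expand $\Prb_\mu^{\wh{g}(\straa_G),\wh{g}(\strab_G)}(\Cone(h_1(\rho_G^1\sigma_n^i\sigma_n^o\ell_{n+1}^1)))$ by conditioning on the observed history $\rho_G^2$ of length $n+1$ and on the next second component $\ell_{n+1}^2$. Using the construction of $H$, the one-step passage from the Player-1 state $(\ell_n^1,\ell_n^2)$ to $(\ell_{n+1}^1,\ell_{n+1}^2)$ factors as $\straa_G(\rho_G^2)(\sigma_n^i)\cdot\strab_G(\rho_G^1\sigma_n^i)(\sigma_n^o)\cdot\Delta(\ell_n^1,\sigma_n^i,\sigma_n^o)(\ell_{n+1}^1)\cdot\un(\ell_{n+1}^1)(\ell_{n+1}^2)$, where I use crucially that $\wh{g}(\straa_G)$ reads only the observed component $g_2$ and $\wh{g}(\strab_G)$ reads only the correct component $g_1$. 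Summing over $\ell_{n+1}^2$ collapses $\un(\ell_{n+1}^1)(\cdot)$ to $1$; then Lemma~\ref{lemm-proof1} and the induction hypothesis rewrite $\Prb_\mu(\cale_{1,2}(\rho_G^1,\rho_G^2))$ as $\Prb_{\ell_0}(\Cone(\rho_G^1))\cdot\ObsSeq(\rho_G^1)(\rho_G^2)$. What remains is $\sum_{\rho_G^2}\ObsSeq(\rho_G^1)(\rho_G^2)\cdot\straa_G(\rho_G^2)(\sigma_n^i)\cdot\strab_G(\rho_G^1\sigma_n^i)(\sigma_n^o)\cdot\Delta(\ell_n^1,\sigma_n^i,\sigma_n^o)(\ell_{n+1}^1)$, in which the sum over observed histories $\rho_G^2$ is exactly the sum over the action-matching prefixes $\ActMt(\rho_G^1)$; this matches verbatim the inductive clause defining the $G$-side cone probability, closing the induction. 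The all-powerful case is identical except that $\strab_G^A(\rho_G^1,\rho_G^2,\sigma_n^i)$ genuinely depends on $\rho_G^2$ and so stays inside the sum, matching the corresponding all-powerful clause of the $G$-side measure and using that $\wh{g}(\strab_G^A)$ is the complete-observation strategy $\strab_H^C$.

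For the second part I would avoid repeating the induction and instead note that the two strategy mappings compose to the identity on the relevant strategies. Since $\straa_H$ is observation-based and Player~1's observation in $H$ is exactly the second component, $\wh{g}(\wh{h}(\straa_H))=\straa_H$; likewise $\wh{g}(\wh{h}(\strab_H))=\strab_H$ for observation-based $\strab_H$ (Player~2 observes the first component), and $\wh{g}(\wh{h}(\strab_H^C))=\strab_H^C$ because $h_{12}(g_1(\rho_H),g_2(\rho_H))=\rho_H$ reconstructs the full history. Applying the first part with $\straa_G:=\wh{h}(\straa_H)$ (and the corresponding choices for Player~2) then yields the second part immediately.

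The main obstacle is the bookkeeping of the inductive step: correctly identifying $\Cone(h_1(\rho_G^1))$ with $\cale_1(\rho_G^1)$, ordering the two nested sums (over $\rho_G^2$ and over $\ell_{n+1}^2$) so that $\un$ sums away cleanly, and—most importantly—recognizing that the summation over admissible observed histories $\rho_G^2$ is precisely the $\ActMt$ summation appearing in the nonstandard $G$-side probability definition. The conceptual crux is that the strategy factorization (Player~1 on $g_2$, Player~2 on $g_1$) together with Lemma~\ref{lemm-proof1} is exactly engineered to reproduce the subtle $G$-side measure, and verifying this correspondence term-by-term is where the real care is required.
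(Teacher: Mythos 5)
Your proof of item~1 follows essentially the same route as the paper: induction on prefix length, expanding the one-step transition in $H$ via the factorization $\delta((\ell_n^1,\ell_n^2,\sigma_n^i),\sigma_n^o)(\ell_{n+1}^1,\ell_{n+1}^2)=\Delta(\ell_n^1,\sigma_n^i,\sigma_n^o)(\ell_{n+1}^1)\cdot\un(\ell_{n+1}^1)(\ell_{n+1}^2)$, summing away $\un$ over $\ell_{n+1}^2$, and invoking Lemma~\ref{lemm-proof1} together with the inductive hypothesis to convert $\Prb_\mu(\Cone(h_{12}(\rho_G^1,\rho')))$ into $\Prb_{\ell_0}(\Cone(\rho_G^1))\cdot\ObsSeq(\rho_G^1)(\rho')$, after which the sum over observed histories is recognized as the $\ActMt(\rho_G^1)$ sum in the nonstandard $G$-side measure; this is exactly the paper's computation (the paper expands LHS and RHS separately and meets in the middle, you push the $H$-side all the way down to the $G$-side, but the ingredients are identical). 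Where you genuinely diverge is item~2: the paper simply re-runs the induction, asserting the proof is ``identical'' with the $\wh{h}$-mapped strategies, whereas you derive item~2 as a corollary of item~1 by observing that $\wh{g}\circ\wh{h}$ is the identity on the relevant strategy classes --- $\wh{g}(\wh{h}(\straa_H))=\straa_H$ because Player~1's observation in $H$ is the second component and $\straa_H$ is observation-based, $\wh{g}(\wh{h}(\strab_H))=\strab_H$ symmetrically for the first component, and $\wh{g}(\wh{h}(\strab_H^C))=\strab_H^C$ since $h_{12}(g_1(\rho_H),g_2(\rho_H))=\rho_H$. This composition argument is correct (observation-basedness is exactly what makes $\wh{h}$ well-defined and the round trip the identity), and it is arguably cleaner than the paper's route: it isolates the only fact that makes the second direction work, avoids duplicating a delicate computation, and makes explicit that item~2 is literally an instance of item~1 specialized to strategies in the image of $\wh{h}$. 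The cost is that you must verify the three round-trip identities carefully, but that verification is short and purely definitional.
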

\begin{proof}
We will present the result for the first item, and the proof for second 
item is identical.
Let us denote by $\straa_H=\wh{g}(\straa_G)$ and $\strab_H=\wh{g}(\strab_G)$.
We will prove the result by induction on the length of the prefixes.
The base case is as follows:
let the length of the prefix $\rho_G^1$ be 1, with $\rho_G^1 = \ell_0$.
We observe that 
$\Prb^{\straa_G,\strab_G}_{\ell_0}(\Cone(\ell_0)) = 1$, and 
$\Prb^{\straa_H,\strab_H}_{\mu}(\Cone(h_1(\ell_{0}))) = 1$, and 
for all other cones of length $1$ the probability is zero. 
This completes the base case.

We now consider the inductive case: by inductive hypothesis we assume that 
$\Prb^{\straa_G,\strab_G}_{\ell_0}(\Cone(\rho_G^1)) = \Prb^{\straa_H,\strab_H}_{\mu}(\Cone(h_1(\rho_G^1 )))$; and 
show that
\[
\Prb^{\straa_G,\strab_G}_{\ell_0}(\Cone(\rho_G^1 a_{n} b_{n} \ell_{n+1})) = 
\Prb^{\straa_H,\strab_H}_{\mu}(\Cone(h_1(\rho_G^1 a_{n} b_{n} \ell_{n+1}))).
\]
Let $\ell_n$ be the last state of $\rho_G^1$.
We first consider the left-hand side (LHS):
\[
\begin{array}{rcl}
\Prb&^{\straa_G,\strab_G}_{\ell_0}&(\Cone(\rho_G^1 a_{n} b_{n} \ell_{n+1})) 
\\ & = &\displaystyle  
\Prb_{\ell_0}^{\straa_{G},\strab_{G}}(\Cone(\rho^{1}_{G})) \cdot \bigg( \sum_{\rho' \in \ActMt(\rho_G^1)} 
\ObsSeq(\rho^{1}_{G})(\rho') 
\cdot \straa_{G}(\rho')(a_{n}) \cdot \strab_{G}(\rho^{1}_{G} a_{n} )(b_{n}) \cdot 
\Delta(\ell_{n}, a_{n}, b_{n})(\ell_{n+1})\bigg) \\[3ex]
& = & 
\displaystyle\Prb^{\straa_H,\strab_H}_{\mu}(\Cone(h_1(\rho_G^1))) \cdot \bigg( \sum_{\rho'\in \ActMt(\rho_G^1)} 
\ObsSeq(\rho^{1}_{G})(\rho') 
\cdot \straa_{G}(\rho')(a_{n}) \cdot \strab_{G}(\rho^{1}_{G} a_{n} )(b_{n}) \cdot \Delta(\ell_{n}, a_{n}, b_{n})(\ell_{n+1})\bigg) \\[3ex]
& = & 
\displaystyle \sum_{\rho'\in \ActMt(\rho_G^1)} \Prb^{\straa_H,\strab_H}_{\mu}(\Cone(h_{12}(\rho_G^1,\rho')))
 \cdot \straa_{G}(\rho')(a_{n}) \cdot \strab_{G}(\rho^{1}_{G} a_{n} )(b_{n}) \cdot \Delta(\ell_{n}, a_{n}, b_{n})(\ell_{n+1})
\end{array}
\]
Above the first equality is by definition, the second equality by inductive hypothesis,
and the last equality is obtained from Lemma~\ref{lemm-proof1} as follows:
by Lemma~\ref{lemm-proof1} we have 
$\ObsSeq(\rho^{1}_{G})(\rho')  =
\Prb^{\straa_H,\strab_H}_{\mu}(\cale_{1,2}(\rho_G^1,\rho') \mid \cale_1(\rho_G^1))$,
and hence
\[
\begin{array}{rcl}
\Prb^{\straa_H,\strab_H}_{\mu}(\Cone(h_1(\rho_G^1)))  \cdot  
& \displaystyle \sum_{\rho'\in \ActMt(\rho_G^1)} & \ObsSeq(\rho^{1}_{G})(\rho') \\[3ex]
& = &   
\displaystyle
\sum_{\rho'\in \ActMt(\rho_G^1)} \Prb^{\straa_H,\strab_H}_{\mu}(\Cone(h_1(\rho_G^1))) \cdot  
\Prb^{\straa_H,\strab_H}_{\mu}(\cale_{1,2}(\rho_G^1,\rho') \mid \cale_1(\rho_G^1)) \\[3ex]
& = & 
\displaystyle
\sum_{\rho'\in \ActMt(\rho_G^1)} \Prb^{\straa_H,\strab_H}_{\mu}(\Cone(h_{12}(\rho_G^1,\rho'))).
\end{array} 
\]

We now consider the right-hand side (RHS) $\Prb^{\straa_H,\strab_H}_{\mu}(\Cone(h_1(\rho_G^1 a_{n} b_{n} \ell_{n+1})))$ 
and the RHS can be expanded as: (below for brevity we write $\wh{\rho}=h_{12}(\rho_{G}^1,\rho')$)
\[
\sum_{\rho'\in \ActMt(\rho_G^1)} \sum_{\ell_{n+1}'} 
\Prb^{\straa_H,\strab_H}_{\mu} (\Cone(\wh{\rho})) \cdot \straa_{H}(\wh{\rho})(a_{n}) \cdot  
\strab_{H}(\wh{\rho} a_{n})(b_{n}) \cdot \delta((\ell_{n},\ell'_{n},a_{n}),b_{n})(\ell_{n+1},\ell'_{n+1}) 
\]
Since we have
\[
\straa_{H}(h_{12}(\rho_G^1, \rho'))(a_{n}) = \straa_{G}(\rho')(a_{n}); 
\qquad \text{and} \qquad
\strab_{H}(h_{12}(\rho_{G}^1, \rho') a_{n})(b_{n}) =  \strab_{G}(\rho^{1}_{G} a_{n} )(b_{n}), 
\]
the above expression for RHS is equivalently described as:
\[
\sum_{\rho'\in \ActMt(\rho_G^1)} \sum_{\ell_{n+1}'} 
\Prb^{\straa_H,\strab_H}_{\mu}(\Cone(h_{12}(\rho_G^1,\rho')))
 \cdot \straa_{G}(\rho')(a_{n}) \cdot \strab_{G}(\rho^{1}_{G} a_{n} )(b_{n}) \cdot \Delta(\ell_{n}, a_{n}, b_{n})(\ell_{n+1}) \cdot \un(\ell_{n+1})(\ell'_{n+1}) 
\]
Since $\sum_{\ell_{n+1}'}  \un(\ell_{n+1})(\ell'_{n+1}) =1$, it follows that LHS is equal to the RHS.
The result for correspondence for all-powerful strategy $\strab_G^A$ is essentially copy-paste of
the above proof replacing appropriately $\strab_G$ by $\strab_G^A$.
This completes the proof and the desired result follows.
\hfill\qed
\end{proof}

It follows that there is a sure,  almost-sure, positive 
winning strategy in $G$ for $\Parity(p_G)$ iff there is a corresponding one in 
$H$ for $\Parity(p_H)$ and hence from Theorem~\ref{thrm1} we obtain the 
following result.

\begin{theorem}\label{lemm1}
The following assertions hold:
\begin{enumerate}

\item \emph{(All-powerful Player~2).}
The sure, almost-sure and positive winning for safety objectives;
the sure and almost-sure winning for reachability objectives and B\"uchi
objectives; the sure and positive winning for coB\"uchi objectives; and
the sure winning for parity objectives can be solved in 
EXPTIME for games with probabilistic uncertainty with all-powerful strategies
for Player~2.
The positive winning for reachability objectives can be solved in PTIME.

\item \emph{(Not all-powerful Player~2).}
The sure, almost-sure winning for safety objectives; and 
the sure winning for parity objectives can be solved in EXPTIME;
the almost-sure winning for reachability objectives and B\"uchi
objectives; the positive winning for safety and coB\"uchi objectives
can be solved in 2EXPTIME for games with probabilistic uncertainty 
without all-powerful strategies for Player~2.
The positive winning for reachability objectives can be solved in EXPTIME.
\end{enumerate}
\end{theorem}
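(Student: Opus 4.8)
The plan is to derive both items of the theorem from the reduction $G \mapsto H$ together with Lemma~\ref{lemm-proof2} and Theorem~\ref{thrm1}, the only remaining work being to lift the cone-level probability equivalence to the parity objective and to account for the three winning modes. First I would record two structural facts about the construction. The size of $H$ is polynomial in the size of $G$: its state space is $L\times L \cup L\times L\times \Sigma_I$, its action sets are $\Sigma_I$ and $\Sigma_O$, and $p_H$ reuses the priorities of $p_G$; moreover, when Player~2 is all-powerful the observation $\Obs_2$ is complete, so $H$ is a \emph{one-sided} game, whereas otherwise $H$ is a general two-sided partial-observation game. Second, the strategy maps $\wh{g}$ and $\wh{h}$ are mutually inverse bijections between the relevant strategy classes: for Player~1 between all strategies $\straa_G$ in $G$ and observation-based strategies $\straa_H \in \Straa_H^O$ (this uses that $\obs_1$ reveals exactly the second component, so that $g_2(\rho_H)$ is an observation-invariant of $\rho_H$ and $\wh{h}$ is well defined), and for Player~2 between $G$-strategies $\strab_G$ (resp. all-powerful $\strab_G^A$) and observation-based $\strab_H \in \Strab_H^O$ (resp. complete-observation $\strab_H^C$). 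This bijectivity is what lets me match the ``for all $\strab$'' quantifier in each winning mode.

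Next I would lift Lemma~\ref{lemm-proof2} from cones to the objective. Because $p_H((\ell_1,\ell_2))=p_H((\ell_1,\ell_2,\sigma_I))=p_G(\ell_1)$, the event $\Parity(p_H)$ depends only on the first-component sequence, i.e., it lies in the $\sigma$-algebra generated by the cones $\Cone(h_1(\rho_G^1))$. Lemma~\ref{lemm-proof2} states that the two measures agree on exactly these generating cones, namely $\Prb^{\straa_G,\strab_G}_{\ell_0}(\Cone(\rho_G^1)) = \Prb^{\wh{g}(\straa_G),\wh{g}(\strab_G)}_{\mu}(\Cone(h_1(\rho_G^1)))$ (and likewise with $\strab_G^A$). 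By the Caratheodory uniqueness of the extension, the induced measures on first-component sequences coincide on the whole $\sigma$-algebra they generate, and in particular $\Prb^{\straa_G,\strab_G}_{\ell_0}(\Parity(p_G)) = \Prb^{\wh{g}(\straa_G),\wh{g}(\strab_G)}_{\mu}(\Parity(p_H))$, with the analogous identity for all-powerful strategies and for the reverse direction via $\wh{h}$.

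With the probability identity in hand, almost-sure and positive winning transfer immediately: combining it with the strategy bijection, $\straa_G$ is almost-sure (resp. positive) winning against every Player~2 strategy in $G$ iff $\wh{g}(\straa_G)$ is against every observation-based (resp. complete-observation) Player~2 strategy in $H$, and conversely. The sure-winning mode I would treat separately, at the level of outcome sets rather than the measure, since sure winning asks $\outcome(G,\ell_0,\straa_G,\strab_G)\subseteq \Parity(p_G)$. Here I would argue that a first-component play $\rho_G^1$ lies in $\outcome(G,\ell_0,\straa_G,\strab_G)$ iff every prefix $\Cone(h_1(\rho_G^1(n)))$ has positive probability in $H$ under $\wh{g}(\straa_G),\wh{g}(\strab_G)$ --- exactly the support information preserved by Lemma~\ref{lemm-proof2} --- so the outcome sets correspond under the first-component projection; since $\Parity$ reads only the first component, sure winning is preserved in both directions.

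Finally, I would conclude by instantiating Theorem~\ref{thrm1} on $H$. For all-powerful Player~2, $H$ is one-sided with Player~2 perfectly informed, so item~1 of Theorem~\ref{thrm1} yields the EXPTIME bounds for the listed objectives and PTIME for positive reachability, giving item~1. For not-all-powerful Player~2, $H$ is a general partial-observation game, so item~2 of Theorem~\ref{thrm1} yields the stated EXPTIME and 2EXPTIME bounds, giving item~2. The main obstacle is not any single computation but the clean lifting of the finite-prefix equivalence to the infinite Borel parity event while simultaneously handling sure winning through the support/outcome argument rather than the measure; the observation-based property of $H$-strategies, which makes $\wh{h}$ well defined and guarantees that $\wh{g}$-images are observation-based, is the hinge on which the quantifier matching, and hence the whole equivalence, turns.
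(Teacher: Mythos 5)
Your proposal is correct and follows essentially the same route as the paper: the paper's own proof of this theorem is precisely the observation that the reduction produces a polynomial-size game $H$ (one-sided when Player~2 is all-powerful, general otherwise), that the strategy mappings $\wh{g}$ and $\wh{h}$ together with Lemma~\ref{lemm-proof2} transfer the sure, almost-sure, and positive winning modes for $\Parity(p_G)$ to $\Parity(p_H)$, and that Theorem~\ref{thrm1} (item~1 for the one-sided case, item~2 for the general case) then yields the stated bounds. The details you add --- mutual inverseness of $\wh{g}$ and $\wh{h}$ to match the strategy quantifiers, the Caratheodory/uniqueness lifting from cones to the Borel parity event, and the support-level (K\"onig-style) treatment of sure winning --- are exactly what the paper compresses into its single sentence ``It follows that \ldots,'' so they elaborate rather than alter the argument.
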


\section{Reduction: POMDPs to Games with Probabilistic Uncertainty}
In this section we present a reduction in the reverse direction and show that
POMDPs with parity objectives can be reduced to games with probabilistic 
uncertainty and parity objectives.
We first present the reduction and then show the correctness of the reduction
by mapping prefixes, strategies, and establishing the equivalence of the 
probability measure.

\smallskip\noindent{\bf Reduction: POMDPs to games with probabilistic uncertainty.}
Let $H = (S, A, \trans, \Obs)$ be a POMDP with a parity objective $\phi$, 
we construct the game of probabilistic uncertainty $G = (L, \Sigma_{I}, \Sigma_{O}, \Delta, \un)$ as follows:
\begin{itemize}
\item $L = S$; 
\item $\Sigma_{I} = A$;
\item $\Sigma_{O} =\set{ \bot }$;
\item For $\ell \in L$ and $a \in \Sigma_{I}$ let $\Delta (\ell,a,\bot)(\ell')=\trans(\ell,a)(\ell')$, 
i.e., the transition function is same as the transition function of the POMDP.
In other words, the state space is the same, the action choices of the POMDP 
corresponds to the input action choice, and the output action set is singleton,
and the transition function mimics the transition function of the POMDP.
Below we use the probabilistic uncertainty to capture the partial-observation of
the POMDP. 

\item The uncertainty function is as follows:
$\un(\ell)(\ell') = \left\{ 
  \begin{array}{l l}
    0 & \qquad \text{if $\obs(\ell) \not = \obs(\ell')$}\\
    \frac{1}{\vert \obs(\ell) \vert} & \qquad \text{if $\obs(\ell) = \obs(\ell')$}\\
  \end{array} \right.
$
\end{itemize}
The parity objective is the same as the original parity objective.

\smallskip\noindent{\bf Mapping of prefixes.}
Given a prefix (or a finite history) $\rho_H= s_0 a_0 s_1 a_1 s_2 \ldots s_{n}$ in $H$ we construct a 
prefix in $G$ as $\rho_G= s_0 a_0 \bot s_1 a_1 \bot s_2 \ldots s_{n}$ by simply
inserting the $\bot$ actions. This construction defines a bijection 
$h: \Pref_{H} \rightarrow \Pref_{G}$ between prefixes. We can naturally extend the mapping to sets of prefixes. 
Let $\Psi \subseteq \Pref_{H}$, then $h'(\Psi) = \set{ h(\rho) \mid \rho \in \Psi }.$

\begin{lemma}
\label{lem:prefix}
For prefixes $\rho, \rho'$ in $G$ the following assertion holds:

$
\ObsSeq(\rho)(\rho') = \left\{
\begin{array}{l l}
\displaystyle 
\frac{1}{\prod_{i=1}^{n} \vert o_{i}\vert } & \qquad \text{If } \obs(h^{-1}(\rho)) = \obs(h^{-1}(\rho')) = o_{1} a_{1} o_{2} \ldots a_{n-1} o_{n} \\[2ex]
0 & \qquad \text{Otherwise} \\
\end{array}
\right.
$
\end{lemma}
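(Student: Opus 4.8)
The plan is to prove the formula by induction on the number of states $n$ of the prefixes $\rho,\rho'$ in $G$, directly unfolding the inductive definition of $\ObsSeq$ from Section~\ref{sec:probimperfect} and plugging in the explicit shape of $\un$ coming from the reduction. The only nonzero contributions come from action-matching prefixes; since $\Sigma_{O}=\set{\bot}$ the output letters always coincide, so the action-matching requirement reduces to equality of the input sequences, and this is already subsumed by the hypothesis $\obs(h^{-1}(\rho))=\obs(h^{-1}(\rho'))$ because the observation sequence records the intervening actions as well. I would therefore first record the trivial reduction: whenever the action (equivalently, input) sequences of $\rho$ and $\rho'$ differ, both sides of the claimed identity are $0$; otherwise $\rho$ and $\rho'$ have a common action sequence, and each is of the form $s_0 a_0 \bot s_1 a_1 \bot \ldots s_n$ obtained from $h^{-1}$ by inserting the $\bot$ letters.

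For the base case ($n=1$, so $\rho=s_1$ and $\rho'=s_1'$ are single locations), the definition gives $\ObsSeq(s_1)(s_1')=\un(s_1)(s_1')$. By the definition of $\un$ this equals $\frac{1}{\abs{\obs(s_1)}}$ when $\obs(s_1)=\obs(s_1')$ and $0$ otherwise, which is exactly $\frac{1}{\abs{o_1}}$ (with $o_1=\obs(s_1)$) in the matching case and $0$ in the non-matching case, as required.

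For the inductive step I would write $\rho=\wh{\rho}\, a_n \bot s_{n+1}$ and $\rho'=\wh{\rho}'\, a_n \bot s_{n+1}'$, where $\wh{\rho},\wh{\rho}'$ have $n$ states and share the last action $a_n$ and output $\bot$, so that the inductive clause for $\ObsSeq$ applies and yields $\ObsSeq(\rho)(\rho')=\ObsSeq(\wh{\rho})(\wh{\rho}')\cdot \un(s_{n+1})(s_{n+1}')$. Applying the induction hypothesis to the first factor and the definition of $\un$ to the second, a short case distinction finishes the argument: if the observation sequences of $\wh{\rho}$ and $\wh{\rho}'$ agree and $\obs(s_{n+1})=\obs(s_{n+1}')$, then the product equals $\frac{1}{\prod_{i=1}^{n}\abs{o_i}}\cdot \frac{1}{\abs{o_{n+1}}}=\frac{1}{\prod_{i=1}^{n+1}\abs{o_i}}$, and the full observation sequences of $\rho,\rho'$ coincide; in every other case at least one of the two factors vanishes and the two observation sequences differ, so both sides are $0$.

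The computation is routine once the bookkeeping is fixed; the main point requiring care, and the one I would state explicitly, is the reduction of the action-matching requirement of $\ObsSeq$ to the hypothesis $\obs(h^{-1}(\rho))=\obs(h^{-1}(\rho'))$, together with the identification of the normalization constant $\abs{\obs(\ell)}$ in $\un$ with the factor $\abs{o_i}$ appearing in the stated product. No other subtlety arises, since the transition and strategy terms of the probability measure never enter the definition of $\ObsSeq$.
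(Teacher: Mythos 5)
Your proof is correct and follows essentially the same route as the paper's: induction on prefix length, with the base case given by the definition of $\un$ and the inductive step unfolding $\ObsSeq(\rho\, a_n \bot\, \ell_{n+1})(\rho'\, a_n \bot\, \ell'_{n+1}) = \ObsSeq(\rho)(\rho')\cdot \un(\ell_{n+1})(\ell'_{n+1})$ followed by the same two-case analysis. Your explicit remark that action-matching is subsumed by equality of observation sequences (since $\Sigma_O=\set{\bot}$ and observation sequences record the input letters) is a small point the paper leaves implicit, but it does not change the substance of the argument.
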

\begin{proof}
We prove the result by induction on the length of prefixes. 
We will only consider $\rho$ and $\rho'$ that have the same length, as 
otherwise by definition the observation sequence probability is~0.
We first consider the base case.

\smallskip\noindent{\em Base case.}
Let $\ell_{0}$ be the initial state. Then $\rho = \ell_{0}$ and 
let $\rho' = \ell$ for some $\ell \in L$. 
Then:
\[
\ObsSeq(\ell_{0}, \ell) = \un(\ell_{0},\ell) = \frac{1}{\vert \obs(\ell_0) \vert}
\] 
if $\ell_{0}$ and $\ell$ have the same observation and $0$  otherwise.
This proves the base case.

\smallskip\noindent{\em Inductive step.}
We now consider prefixes of length $n+1$, and by inductive hypothesis
the result holds for prefixes of length $n$. 
Then 
\[
\ObsSeq(\rho a_{n} \bot \ell_{n+1})(\rho' a_{n} \bot \ell'_{n+1}) = 
\ObsSeq(\rho)(\rho') \cdot \un(\ell_{n+1})(\ell'_{n+1}).
\]
We now consider two cases to complete the proof.
\begin{itemize}
\item If  
$\obs(h^{-1}(\rho a_{n} \bot \ell_{n+1})) \not =  
\obs(h^{-1}(\rho' a_{n} \bot \ell'_{n+1}))$, 
then either $\obs(h^{-1}(\rho)) \not  = \obs(h^{-1}(\rho'))$ 
or $\obs(\ell_{n+1}) \not = \obs(\ell'_{n+1})$. 
It follows that one of the factors ($\ObsSeq(\rho)(\rho')$ or 
$\un(\ell_{n+1})(\ell'_{n+1}$)) is equal to $0$ and hence:
\[
\ObsSeq(\rho a_{n} \bot \ell_{n+1})(\rho' a_{n} \bot \ell'_{n+1}) = 0 
\]

\item Otherwise, we have $\obs(h^{-1}(\rho a_{n} \bot \ell_{n+1})) =  
\obs(h^{-1}(\rho' a_{n} \bot \ell'_{n+1})) = 
o_{1} a_{1} o_{2} \ldots a_{n-1} o_{n} a_{n} o_{n+1}$.
Then:
\[
\ObsSeq(\rho a_{n} \bot \ell_{n+1})(\rho' a_{n} \bot \ell'_{n+1}) = 
\ObsSeq(\rho)(\rho') \cdot \un(\ell_{n+1})(\ell'_{n+1}) = 
\frac{1}{\prod_{i=1}^{n} \vert o_{i}\vert } \cdot \frac{1}{\vert o_{n+1} \vert} = \frac{1}{\prod_{i=1}^{n+1} \vert o_{i}\vert }
\]
\end{itemize}
The desired result follows.
\hfill\qed
\end{proof}

\smallskip\noindent{\bf Mapping of strategies.} We first present 
the mapping of strategies from $H$ to $G$ and then from $G$ to $H$.
Note that in the game $G$, there is no choice for Player~2, and hence
we remove the Player~2 strategies in the descriptions below.

\smallskip\noindent{\em Mapping strategies from $H$ to $G$.}
Let $\alpha_{H}$ be an observation-based Player-1 strategy in $H$ and 
$\rho_{G} =s_0 a_0 \bot s_1 a_1 \bot s_2 \ldots s_{n} $ be a prefix in $G$. 
We define a Player-1 strategy $\alpha_{G}$ in $G$ as follows:
$\alpha_{G} (\rho_{G}) = \alpha_{H} (h^{-1}(\rho_{G}))$.

\smallskip\noindent{\em Mapping strategies from $G$ to $H$.}
Let $\straa_{G}$ be a Player-1 strategy in $G$ and 
$\rho_{H} = s_0 a_0 s_1 a_1  s_2 \ldots s_{n}$ be a prefix in $H$ with 
$o = o_{0} a_{0} o_{1} a_{1} o_{2} \ldots o_{n}$ as its observation sequence. 
Note that as Player~2 has only one strategy (always playing $\bot$) 
we omit it from discussion.
%%By $\strab_{G}$ we denote the always playing $\bot$ player 2 strategy in $G$.
Note that every $\rho \in \ActMt(h(\rho_{H}))$ can have different actions with different probabilities enabled. We define a Player-1 strategy $\straa_{H}$ in $H$ as follows: for an action $a \in A$ we have
%\[
%\straa_{H}(\rho_{H})(a) = 
%\frac{ \Sigma_{\rho \in \ObsMt(h(\rho_{H}))} \left(\Prb^{\straa_G,\strab_G}_{s_{0}}(\Cone(\rho)) \, \cdot \, \Sigma_{\rho' \in \ActMt(\rho_{H})} \left( \ObsSeq(\rho)(\rho')\, \cdot \, \straa_{G}(\rho')(a)\right) \right)} {\Prb^{\straa_G,\strab_G}_{s_{0}}(\Cone(\ObsMt(h(\rho_{H})))) }
%\]
\[
\straa_{H}(\rho_{H})(a) =  
\displaystyle 
\sum_{\rho' \in \ActMt(h(\rho_{H}))} \ObsSeq(h(\rho_{H}))(\rho')\, \cdot \, \straa_{G}(\rho')(a). 
\]
We now show that the strategy $\straa_H$ is an observation-based 
strategy for Player~1 in the POMDP.

\begin{lemma}
The strategy $\straa_H$ obtained from strategy $\straa_G$ is an observation-based 
strategy for Player~1 in $H$.
\end{lemma}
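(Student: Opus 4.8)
The plan is to unfold the definition of an observation-based strategy and show that the right-hand side of the defining formula for $\straa_H(\rho_H)(a)$ depends on $\rho_H$ only through its observation sequence $\obs(\rho_H)$. So I would fix two Player-1 prefixes $\rho_H,\rho_H'$ in $H$ with $\obs(\rho_H)=\obs(\rho_H')$ and aim to establish $\straa_H(\rho_H)(a)=\straa_H(\rho_H')(a)$ for every action $a\in A$. The defining sum has three ingredients depending on $\rho_H$: the index set $\ActMt(h(\rho_H))$, the weights $\ObsSeq(h(\rho_H))(\rho')$, and the factor $\straa_G(\rho')(a)$; the last of these does not involve $\rho_H$ at all, so the work is to show the first two factor through the observation sequence.

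First I would argue that the summation index set is identical for the two prefixes. Since the bijection $h$ merely inserts the singleton output letters $\bot$ and leaves the input actions and states untouched, $h(\rho_H)$ and $h(\rho_H')$ carry the same action sequence as $\rho_H$ and $\rho_H'$; and equal observation sequences in a POMDP have in particular equal action sequences. As $\ActMt$ is, by definition, a function of the action sequence of its argument alone (the states of the matching prefixes are left free, and the output letters are all $\bot$), it follows that $\ActMt(h(\rho_H))=\ActMt(h(\rho_H'))$, so both sums range over the same set of observed prefixes $\rho'$.

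Next I would invoke Lemma~\ref{lem:prefix} to control the weights. That lemma shows $\ObsSeq(h(\rho_H))(\rho')$ is zero unless $\obs(h^{-1}(h(\rho_H)))=\obs(\rho_H)$ coincides with $\obs(h^{-1}(\rho'))$, in which case it equals $\frac{1}{\prod_{i=1}^{n}\abs{o_i}}$, a quantity determined solely by the common observation sequence. Because $\obs(\rho_H)=\obs(\rho_H')$, the support condition and the numerical value are the same whether computed from $h(\rho_H)$ or $h(\rho_H')$; hence $\ObsSeq(h(\rho_H))(\rho')=\ObsSeq(h(\rho_H'))(\rho')$ for every $\rho'$ in the shared index set. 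Summing the three factors term by term then yields $\straa_H(\rho_H)(a)=\straa_H(\rho_H')(a)$, which is exactly the observation-basedness condition.

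The only point needing care is the bookkeeping that \emph{both} the index set $\ActMt$ and the weights $\ObsSeq$ factor through the observation sequence rather than through the concrete state history; once Lemma~\ref{lem:prefix} is in hand this is routine, and I expect no genuine obstacle. For completeness I would also note in passing that $\straa_H(\rho_H)$ is a bona fide distribution, and hence $\straa_H$ is a well-defined randomized Player-1 strategy: since $\sum_{\rho'}\ObsSeq(h(\rho_H))(\rho')=1$ and each $\straa_G(\rho')$ is a distribution over $A$, exchanging the two finite sums gives $\sum_{a\in A}\straa_H(\rho_H)(a)=1$.
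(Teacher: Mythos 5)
Your proof is correct and follows essentially the same route as the paper's: both reduce observation-basedness to the two facts that $\ActMt(h(\rho_H))=\ActMt(h(\rho_H'))$ and that, by Lemma~\ref{lem:prefix}, $\ObsSeq(h(\rho_H))(\rho')$ depends only on the observation sequence, then conclude term by term. Your extra details (why $\ActMt$ factors through the action sequence, and the check that $\straa_H(\rho_H)$ is a genuine distribution) are sound refinements of the same argument, not a different approach.
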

\begin{proof}
Let $\rho_H$ and $\rho_H'$ be two prefixes in $H$ that match in 
observation sequence and we need to argue that $\straa_H$ plays the same 
for both prefixes $\rho_H$ and $\rho_H'$.
Observe that since $\rho_H$ and $\rho_H'$ has the same observation sequence,
we have $\ActMt(h(\rho_H))=\ActMt(h(\rho_H'))$.
Moreover it follows from Lemma~\ref{lem:prefix} that 
$\ObsSeq(h(\rho_H))$ only depends on the observation sequence of 
$\rho_H$ and hence for all $\rho' \in 
\ActMt(h(\rho_H))= \ActMt(h(\rho_H'))$ we have 
$\ObsSeq(h(\rho_{H}))(\rho')=\ObsSeq(h(\rho_{H}'))(\rho')$.
It follows that for all actions $a \in A$ we have $\straa_H(\rho_H)(a)=
\straa_H(\rho_H')(a)$.
It follows that $\straa_H$ is observation based.
\hfill\qed
\end{proof}

\smallskip\noindent{\bf Correspondence of probabilities.}
In the following two lemmas we establish the correspondence of the 
probabilities for the mappings.
%%In the rest of the proofs we will Player 2 strategy $\strab_{G}$ as it is always playing $\bot$ action.

\begin{lemma}
\label{lem:HtoG}
Let us consider the mapping of strategies from $ H $ to $ G$.
For all prefixes $\rho_{H}$ in $H$ we have 
\[
\Prb^{\straa_H}_{\mu}(\Cone(\rho_H)) = \Prb^{\straa_{G}}_{\ell_{0}}(\Cone(h(\rho_{H}))).
\]
\end{lemma}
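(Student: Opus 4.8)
The plan is to prove the identity by induction on the length of the prefix $\rho_H$, exploiting the fact that building $\straa_G$ from the observation-based strategy $\straa_H$ makes the inner $\ObsSeq$-weighted sum in the probability measure of $G$ collapse to a single factor. For the base case, the only length-$1$ prefix with positive probability is $\rho_H=\ell_0$, for which $h(\rho_H)=\ell_0$; both sides then equal $1$ by the defining base clauses of the two probability measures (and both vanish for any other singleton cone).

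For the inductive step I would fix a prefix $\rho_H$ with last location $\ell_n$, assume the claim for $\rho_H$, and consider the one-step extension $\rho_H\, a_n\, \ell_{n+1}$, noting that $h(\rho_H\, a_n\, \ell_{n+1})=h(\rho_H)\, a_n\, \bot\, \ell_{n+1}$. Expanding the right-hand side by the inductive definition of the measure in $G$, and using that $\Sigma_O=\set{\bot}$ is a singleton (so Player~2's factor is $1$) and that $\Delta(\ell_n,a_n,\bot)(\ell_{n+1})=\trans(\ell_n,a_n)(\ell_{n+1})$ is independent of the summation variable and can be pulled out, I get
\[
\Prb^{\straa_G}_{\ell_0}(\Cone(h(\rho_H)\, a_n\, \bot\, \ell_{n+1}))
= \Prb^{\straa_G}_{\ell_0}(\Cone(h(\rho_H))) \cdot \Big( \sum_{\rho'\in \ActMt(h(\rho_H))} \ObsSeq(h(\rho_H))(\rho')\cdot \straa_G(\rho')(a_n)\Big)\cdot \trans(\ell_n,a_n)(\ell_{n+1}).
\]
By the inductive hypothesis the first factor equals $\Prb^{\straa_H}_{\mu}(\Cone(\rho_H))$, so it suffices to show the bracketed sum equals $\straa_H(\rho_H)(a_n)$; combining these with the defining clause of the POMDP measure then reconstructs $\Prb^{\straa_H}_{\mu}(\Cone(\rho_H\, a_n\, \ell_{n+1}))$.

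The crux, and the step I expect to require the most care, is the collapse of the bracketed sum. Here I would argue as follows. By Lemma~\ref{lem:prefix}, $\ObsSeq(h(\rho_H))(\rho')$ is nonzero only for those action-matching $\rho'$ whose location components carry the same observation sequence as $h(\rho_H)$; for every such $\rho'$ the preimage $h^{-1}(\rho')$ therefore has the same observation sequence as $\rho_H$, and since $\straa_H$ is observation-based we get $\straa_G(\rho')(a_n)=\straa_H(h^{-1}(\rho'))(a_n)=\straa_H(\rho_H)(a_n)$, the same constant across the whole support. Pulling this constant out leaves $\straa_H(\rho_H)(a_n)\cdot \sum_{\rho'}\ObsSeq(h(\rho_H))(\rho')$, and the remaining sum is $1$ because $\ObsSeq(h(\rho_H))$ is a probability distribution over action-matching prefixes (equivalently, by Lemma~\ref{lem:prefix} the nonzero weights are uniformly $1/\prod_i\abs{o_i}$ over exactly $\prod_i\abs{o_i}$ such $\rho'$). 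This yields the bracketed sum $=\straa_H(\rho_H)(a_n)$ and closes the induction.
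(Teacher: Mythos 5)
Your proposal is correct and follows essentially the same route as the paper's proof: induction on the prefix length, expanding the cone probability in $G$, invoking the inductive hypothesis, and collapsing the $\ObsSeq$-weighted sum to $\straa_H(\rho_H)(a_n)$ via Lemma~\ref{lem:prefix} together with the observation-based property of $\straa_H$ and the fact that $\ObsSeq(h(\rho_H))$ sums to $1$ over observation-matching prefixes. The only cosmetic difference is that you make explicit the trivial Player~2 factor coming from $\Sigma_O=\set{\bot}$, which the paper handles by simply omitting Player~2 strategies from the discussion.
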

\begin{proof}
The proof is based on induction 
on the length of the prefix $\rho_{H}$. We denote the last state of $\rho_{H}$ by $\ell_{n}$.

\smallskip\noindent{\em Base case.}
For prefixes of length 1 where $\rho_{H} = \ell_{0}$ we get 
$\Prb^{\straa_H}_{\mu}(\Cone(\ell_{0})) = 1$ and $\Prb^{\straa_{G}}_{l_{0}}(\Cone(h(\ell_{0}))) = 1$. 
For all other prefixes both sides are equal to $0$.
Hence the base case follows.

\smallskip\noindent{\em Inductive step.}
By inductive hypothesis we assume the result for prefixes $\rho_H$ of length $n$
(i.e., we assume that  $\Prb^{\straa_H}_{\mu}(\Cone(\rho_H)) 
= \Prb^{\straa_G}_{\ell_0}(\Cone(h(\rho_H)))$) and will show that 
\[
\Prb^{\straa_H}_{\mu}(\Cone(\rho_H a_{n} \ell_{n+1})) = \Prb^{\straa_G}_{\ell_0}(\Cone(h(\rho_H a_n \ell_{n+1}))).
\]
First we expand the left hand side (LHS) and by definition we get that:
\[
\Prb^{\straa_H}_{\mu}(\Cone(\rho_H a_{n} \ell_{n+1})) = 
\Prb^{\straa_H}_{\mu}(\Cone(\rho_H)) \cdot \straa_{H}(\rho_H)(a_{n}) \cdot \delta(\ell_{n},a_{n})(\ell_{n+1}). 
\]
We now expand the right hand side (RHS) and get that:
\[
\begin{array}{l}
\Prb^{\straa_{G}}_{\ell_{0}}(\Cone(h(\rho_{H} a_{n} \ell_{n+1}))) = \\
\displaystyle \qquad 
\Prb^{\straa_{G}}_{\ell_{0}}(\Cone(h(\rho_{H} ))) \cdot \left( \sum_{\rho' \in \ActMt(h(\rho_H))} \ObsSeq(h(\rho_{H}))(\rho') 
\cdot \straa_{G}(\rho')(a_n) \cdot \Delta(\ell_{n},a_{n}, \bot)(\ell_{n+1}) \right) 
\end{array}
\]
Using the inductive hypothesis, the definition of the game, and the mapping of strategies we get on the RHS:
\[
\begin{array}{l}
\Prb^{\straa_{G}}_{\ell_{0}}(\Cone(h(\rho_{H} a_{n} \ell_{n+1}))) = \\
\displaystyle \qquad 
\Prb^{\straa_H}_{\mu}(\Cone(\rho_H)) \cdot \left( \sum_{\rho' \in \ActMt(h(\rho_H))} \ObsSeq(h(\rho_{H}))(\rho') 
\cdot \straa_{H}(h^{-1}(\rho'))(a_n) \cdot \delta(\ell_{n},a_{n})(\ell_{n+1}) \right) 
\end{array}
\]
For all $\rho'$ that do not match the observation sequence of $h(\rho_H)$, we have 
$\ObsSeq(h(\rho_H))(\rho')=0$ (by Lemma~\ref{lem:prefix}),
and as $\straa_{H}$ is observation based for all $\rho' \in \ActMt(\rho_{H})$ that matches
the observation sequence of $h(\rho_H)$, the strategy $\straa_H$ plays the same.
Let us denote by $\rho' \approx h(\rho_H)$ that $\rho'$ matches the observation sequence 
of $h(\rho_H)$. 
Then we have 
\[
\begin{array}{rcl}
\displaystyle
\sum_{\rho' \in \ActMt(h(\rho_H))} & \ObsSeq(h(\rho_{H}))(\rho') 
\cdot & \straa_{H}(h^{-1}(\rho'))(a_n)  \\ 
 & = & 
\displaystyle
\sum_{\rho' \in \ActMt(h(\rho_H)), \rho' \approx h(\rho_H)} \ObsSeq(h(\rho_{H}))(\rho') 
\cdot \straa_{H}(h^{-1}(\rho'))(a_n) \\[4ex]
& = &
\displaystyle
\sum_{\rho' \in \ActMt(h(\rho_H)), \rho' \approx h(\rho_H)} \ObsSeq(h(\rho_{H}))(\rho') 
\cdot \straa_{H}(\rho_H)(a_n) \\[4ex]
& = & \straa_{H}(\rho_H)(a_n); 
\end{array}
\]
where the first equality follows as for all sequences $\rho'$ that do 
not match the observation sequence of $h(\rho_H)$ we have 
$\ObsSeq(h(\rho_H))(\rho')=0$; 
the second equality follows as for all $\rho' \approx h(\rho_H)$ we have 
$\straa_H(h^{-1}(\rho'))(a_n)=\straa_H(\rho_H)(a_n)$ (as $\straa_H$ is observation based);
and the last equality follows because as $\ObsSeq$ is a probability distribution we have 
$\sum_{\rho' \in \ActMt(h(\rho_H)), \rho' \approx h(\rho_H)} 
\ObsSeq(h(\rho_{H}))(\rho') = 1$.
Hence we have 
\[
\Prb^{\straa_{G}}_{\ell_{0}}(\Cone(h(\rho_{H} a_{n} \ell_{n+1}))) = 
\Prb^{\straa_H}_{\mu}(\Cone(\rho_H))  \cdot \straa_{H}(\rho_{H})(a_n)  \cdot \delta(\ell_{n},a_{n})(\ell_{n+1}) 
%%\cdot \left( \sum_{\rho' \in \ActMt(h(\rho_H))} \ObsSeq(h(\rho_{H}))(\rho') \right) 
\]
%As $\sum_{\rho' \in \ActMt(h(\rho_H))} \ObsSeq(h(\rho_{H}))(\rho') = 1$, 
Thus we have 
that LHS and RHS coincide and this completes the proof.
\hfill\qed
\end{proof}

\begin{lemma}
Let us consider the mapping of strategies from $ G $ to $ H$.
For all prefixes $\rho_{G}$ in $G$ we have
\[
\Prb^{\straa_H}_{\mu}(\Cone(h^{-1}(\rho_G))) = \Prb^{\straa_{G}}_{\ell_{0}}(\Cone(\rho_{G})) 
\]
\end{lemma}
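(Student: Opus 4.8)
The plan is to prove the identity by induction on the length of $\rho_G$, mirroring the proof of Lemma~\ref{lem:HtoG} but with the strategy mapping running in the opposite direction. Since $h$ is a bijection on prefixes, I would set $\rho_H = h^{-1}(\rho_G)$ throughout, so that the claim becomes $\Prb^{\straa_H}_{\mu}(\Cone(\rho_H)) = \Prb^{\straa_G}_{\ell_0}(\Cone(h(\rho_H)))$, where $\straa_H$ is now the strategy obtained from $\straa_G$ through the $G\to H$ mapping. The base case $\rho_G = \ell_0$ is immediate: both cone probabilities equal $1$, and every other length-one cone has probability $0$.

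For the inductive step I would take a prefix $\rho_G\,a_n\bot\,\ell_{n+1}$ of length $n+1$, where $\ell_n$ is the last location of $\rho_G$, whose preimage is $\rho_H\,a_n\,\ell_{n+1}$. Expanding the left-hand side by the POMDP probability measure gives
\[
\Prb^{\straa_H}_{\mu}(\Cone(\rho_H a_n \ell_{n+1})) = \Prb^{\straa_H}_{\mu}(\Cone(\rho_H)) \cdot \straa_H(\rho_H)(a_n) \cdot \trans(\ell_n,a_n)(\ell_{n+1}).
\]
The key observation is that by the very definition of the $G\to H$ mapping the factor $\straa_H(\rho_H)(a_n)$ equals the $\ObsSeq$-weighted average $\sum_{\rho'\in\ActMt(h(\rho_H))}\ObsSeq(h(\rho_H))(\rho')\cdot\straa_G(\rho')(a_n)$, and since $h(\rho_H)=\rho_G$ this is precisely $\sum_{\rho'\in\ActMt(\rho_G)}\ObsSeq(\rho_G)(\rho')\cdot\straa_G(\rho')(a_n)$. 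Applying the inductive hypothesis to $\Prb^{\straa_H}_{\mu}(\Cone(\rho_H))$ then rewrites the entire left-hand side in terms of quantities in $G$.

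For the right-hand side I would invoke the inductive definition of the probability measure for games with probabilistic uncertainty. Because $\Sigma_O = \set{\bot}$, Player~2 necessarily plays $\bot$ with probability one, so the factor $\strab_G(\rho_G a_n)(\bot)$ equals $1$, and by construction $\Delta(\ell_n,a_n,\bot)(\ell_{n+1}) = \trans(\ell_n,a_n)(\ell_{n+1})$. Hence
\[
\Prb^{\straa_G}_{\ell_0}(\Cone(\rho_G a_n \bot \ell_{n+1})) = \Prb^{\straa_G}_{\ell_0}(\Cone(\rho_G)) \cdot \Big(\sum_{\rho'\in\ActMt(\rho_G)}\ObsSeq(\rho_G)(\rho')\cdot\straa_G(\rho')(a_n)\Big) \cdot \trans(\ell_n,a_n)(\ell_{n+1}).
\]
Comparing this with the rewritten left-hand side, the two expressions coincide factor by factor, which closes the induction.

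I do not expect a genuine obstacle in this direction. Unlike Lemma~\ref{lem:HtoG}, where one had to argue separately that the observation-based strategy $\straa_H$ is constant across action-matching prefixes and that $\ObsSeq$ sums to one, here the definition of $\straa_H$ was tailored exactly to reproduce the $\ObsSeq$-weighted combination appearing in the $G$-measure, so the matching is direct. The only point needing care is the bookkeeping that $h(h^{-1}(\rho_G))=\rho_G$ and $\ActMt(h(\rho_H)) = \ActMt(\rho_G)$, which guarantees that the summation ranges and $\ObsSeq$-weights on the two sides are literally identical; once this is recorded, the comparison of factors is mechanical.
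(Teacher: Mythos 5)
Your proposal is correct and follows essentially the same route as the paper's own proof: induction on prefix length, expanding the $H$-side via the POMDP measure and the $G$-side via the uncertainty-game measure, and closing the gap by noting that the definition of $\straa_H$ was tailored to equal the $\ObsSeq$-weighted sum $\sum_{\rho'\in\ActMt(\rho_G)}\ObsSeq(\rho_G)(\rho')\cdot\straa_G(\rho')(a_n)$, together with $\Delta(\ell_n,a_n,\bot)=\trans(\ell_n,a_n)$. Your explicit remark that the unique Player~2 strategy contributes a factor of $1$ is a point the paper leaves implicit, but it does not change the argument.
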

\begin{proof}
The inductive proof is as follows and we will denote the last state of $\rho_G$ as $\ell_n$.
The base case is similar to the base case of Lemma~\ref{lem:HtoG}.
We now present the inductive case.

\smallskip\noindent{\em Inductive step.}
By inductive hypothesis we assume the result for prefixes $\rho_G$ of length $n$
(i.e., we assume that  
$\Prb^{\straa_H}_{\mu}(\Cone(h^{-1}(\rho_G))) = \Prb^{\straa_{G}}_{\ell_{0}}(\Cone(\rho_{G}))$) 
and will show that
\[
\Prb^{\straa_H}_{\mu}(\Cone(h^{-1}(\rho_G a_n \ell_{n+1}))) = \Prb^{\straa_{G}}_{\ell_{0}}(\Cone(\rho_{G} a_n \ell_{n+1})).
\] 
First we expand the right hand side (RHS) and by definition we get that:
\[
\Prb^{\straa_{G}}_{\ell_{0}}(\Cone(\rho_{G} a_{n} \ell_{n+1})) = 
\Prb^{\straa_{G}}_{\ell_{0}}(\Cone(\rho_{G} )) 
\cdot \left( \sum_{\rho' \in \ActMt(\rho_G)} 
\ObsSeq(\rho_{G})(\rho') \cdot 
\straa_{G}(\rho')(a_n) \cdot \Delta(\ell_{n},a_{n}, \bot)(\ell_{n+1}) \right) 
\]
As $\Delta(\ell_{n},a_{n}, \bot)(\ell_{n+1})$ does not depend on $\rho'$ we get:
\[
\Prb^{\straa_{G}}_{\ell_{0}}(\Cone(\rho_{G} a_{n} \ell_{n+1})) = 
\Prb^{\straa_{G}}_{\ell_{0}}(\Cone(\rho_{G} )) 
\cdot \Delta(\ell_{n},a_{n}, \bot)(\ell_{n+1}) \cdot 
\left( \sum_{\rho' \in \ActMt(\rho_G)} \ObsSeq(\rho_{G})(\rho') 
\cdot \straa_{G}(\rho')(a_n)  \right) 
\]
We will now show that the expansion of the left hand side (LHS) also gives
the same expression.
Let $\rho_H=h^{-1}(\rho_G)$.
By expanding the LHS we get:
\[
\begin{array}{rcl}
\Prb^{\straa_H}_{\mu}(\Cone(h^{-1}(\rho_G a_{n} \ell_{n+1}))) 
& = & 
\Prb^{\straa_H}_{\mu}(\Cone(h^{-1}(\rho_G))) \cdot \straa_{H}(h^{-1}(\rho_G))(a_{n}) \cdot \delta(\ell_{n},a_{n})(\ell_{n+1}) \\[2ex]
& = & 
\Prb^{\straa_H}_{\mu}(\Cone(\rho_H)) \cdot \straa_{H}(\rho_H)(a_{n}) \cdot \delta(\ell_{n},a_{n})(\ell_{n+1}) \\[2ex]
& = & 
\Prb^{\straa_H}_{\mu}(\Cone(\rho_H)) \cdot \straa_{H}(\rho_H)(a_{n}) \cdot \Delta(\ell_{n},a_{n},\bot)(\ell_{n+1}) \\[2ex]
& = & 
\Prb^{\straa_G}_{\ell_0}(\Cone(\rho_G)) \cdot \straa_{H}(\rho_H)(a_{n}) \cdot \Delta(\ell_{n},a_{n},\bot)(\ell_{n+1}); 
\end{array}
\]
where the first equality is by definition; the second equality is by simply re-writing $h^{-1}(\rho_G)$ as 
$\rho_H$; the third equality is by the definition of $\Delta$ and $\delta$; and the final 
equality is the inductive hypothesis.
By definition of $\straa_H$ we have 
$\straa_H(\rho_H)(a_n)= \left( \sum_{\rho' \in \ActMt(\rho_G)} \ObsSeq(\rho_{G})(\rho') 
\cdot \straa_{G}(\rho')(a_n)  \right)$; and hence it follows 
that LHS and RHS coincide.
Thus the desired result follows.
\hfill\qed
\end{proof}

\begin{comment}
\smallskip\noindent{\bf Reverse reduction for lower bounds from blind POMDPs.}
We now present a reduction from blind POMDPs to games with probabilistic 
uncertainty.
The basic intuition is as follows: even though input player observes the 
location, all locations are equally likely.
Thus even with the observation of the location there is no information 
gained, i.e., we faithfully mimic the blind POMDP.
Formally given a blind POMDP
$H=(S_1 \cup S_2, A_1, A_2=\set{\bot}, \trans_1 \cup \trans_2)$ we construct 
the game with probabilistic uncertainty 
$G=(S_1 \cup S_2, \Sigma_I=A_1,\Sigma_O=A_2=\set{\bot},\Delta)$, where 
$\Delta$ is defined as follows: for $s \in S_1$ and $\sigma_I \in \Sigma_I$
we have $\Delta(s,\sigma_I,\bot)(s')=\trans_1(s,a_1)(s')$; 
and for $s \in S_2$ and $\sigma_I \in \Sigma_I$ we have  
$\Delta(s,\sigma_I,\bot)(s')=\trans(s,\bot)(s')$.
The probabilistic uncertainty function is as follows: 
for $L=S$ and $\ell,\ell' \in L$ we have 
$\un(\ell)(\ell')=\frac{1}{|L|}$.
The correspondence between strategies in $G$ and $H$ is straightforward (similar to 
the case presented above) and
\end{comment}
The previous two lemmas establish the equivalence of the probability measure
and completes the reduction of POMDPs to games with probabilistic uncertainty. 
Hence the lower bounds for POMDPs also gives us the lower bound for 
games with probabilistic uncertainty.
Hence Theorem~\ref{lemm1}, along with the reduction from POMDPs and 
Theorem~\ref{thrm1} gives us the following result for games with 
probabilistic uncertainty (the results are also summarized in 
Table~\ref{tab:complexity}).

\begin{theorem}\label{thrm2}
The following assertions hold: 
\begin{enumerate}
\item \emph{(All-powerful Player~2).}
The sure, almost-sure and positive winning for safety objectives;
the sure and almost-sure winning for reachability objectives and B\"uchi
objectives; the sure and positive winning for coB\"uchi objectives; and
the sure winning for parity objectives are all 
EXPTIME-complete %%and is PSPACE-hard 
for games with probabilistic uncertainty with
all-powerful strategies for Player~2.
The positive winning for reachability objectives is PTIME-complete.

\item \emph{(Not all-powerful Player~2).}
The sure, almost-sure winning for safety objectives; and 
the sure winning for parity objectives are all EXPTIME-complete;
the almost-sure winning for reachability objectives and B\"uchi
objectives; the positive winning for safety and coB\"uchi objectives
can be solved in 2EXPTIME and is EXPTIME-hard for games with probabilistic 
uncertainty without all-powerful strategies for Player~2.
The positive winning for reachability objectives can be solved in EXPTIME.

\item \emph{(Undecidability results).} 
The positive winning problem for B\"uchi objectives, the almost-sure winning
problem for coB\"uchi objectives, and the positive and almost-sure winning problem 
for parity objectives are undecidable for games with probabilistic uncertainty.

\end{enumerate}
\end{theorem}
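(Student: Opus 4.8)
The plan is to assemble the three items from the two reductions already in place, using Theorem~\ref{lemm1} for all upper bounds and the reverse reduction of Section~5 (together with Theorem~\ref{thrm1}) for all lower bounds and undecidability. Concretely, the upper bounds asserted in items~(1) and~(2)---the EXPTIME, 2EXPTIME, and PTIME solvability claims---are \emph{verbatim} the statements of Theorem~\ref{lemm1}, which were obtained from the forward reduction (games with probabilistic uncertainty to partial-observation games) of Section~4. So for the upper bounds there is nothing left to do beyond citing Theorem~\ref{lemm1}.

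For the hardness and undecidability I would transport the corresponding POMDP results of Theorem~\ref{thrm1} across the reverse reduction. The first thing to note is that this reduction sends a POMDP to a game $G$ with singleton output alphabet $\Sigma_O = \set{\bot}$, so Player~2 has exactly one (trivial) strategy; hence the all-powerful and not-all-powerful settings coincide on the image of the reduction, and any hardness obtained this way applies simultaneously to items~(1) and~(2). The core of the argument is that the strategy mappings $\straa_H \mapsto \straa_G$ and $\straa_G \mapsto \straa_H$ preserve the winning modes. For the almost-sure and positive modes this is immediate from the two cone-probability equalities already proved, namely $\Prb^{\straa_H}_{\mu}(\Cone(\rho_H)) = \Prb^{\straa_G}_{\ell_0}(\Cone(h(\rho_H)))$ and its converse: equality on all cones lifts by the Caratheodory extension to equality on every Borel set, in particular on the (identical) parity objective~$\phi$, whose priorities are preserved since $h$ only inserts $\bot$-actions. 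Thus $\straa_H$ is almost-sure (resp.\ positive) winning in the POMDP iff $\straa_G$ is almost-sure (resp.\ positive) winning in $G$, and both reductions are polynomial, giving a polynomial reduction in each direction.

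With these equivalences in hand I would read off the conclusions: the EXPTIME-hardness of the POMDP problems in Theorem~\ref{thrm1}(1) supplies the matching EXPTIME lower bounds, which combined with the EXPTIME upper bounds of Theorem~\ref{lemm1} yields the EXPTIME-completeness claims; PTIME-hardness of positive reachability for POMDPs gives the PTIME-completeness in item~(1); and the undecidable POMDP problems of Theorem~\ref{thrm1}(3) (positive B\"uchi, almost-sure coB\"uchi, and positive/almost-sure parity) transfer to yield item~(3). The main obstacle is the \emph{sure}-winning case, which is qualitative rather than quantitative: it is governed not by the probability measure but by the outcome (support) sets, so the cone-probability lemmas alone do not settle it. Here I would separately check that the prefix bijection $h$, extended to plays, carries $\outcome(G,\ell_0,\straa_G,\strab_G)$ onto $\outcome(H,\ell_0,\straa_H)$ and back---equivalently, that a play survives in $G$ iff its $h$-preimage survives in $H$, using that $\un$ assigns positive probability exactly within an observation class and that $\Delta(\ell,a,\bot)=\trans(\ell,a)$. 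Once this support correspondence is verified, sure winning is preserved as well, and all three items follow.
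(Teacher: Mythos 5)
Your proposal matches the paper's own proof: the upper bounds are cited verbatim from Theorem~\ref{lemm1}, and all hardness and undecidability claims are transported from the POMDP results of Theorem~\ref{thrm1} across the Section~5 reduction (with Player~2 trivial there, so the lower bounds hit both the all-powerful and not-all-powerful settings), the cone-probability lemmas extending to Borel objectives by the Caratheodory argument. Your additional observation---that the \emph{sure}-winning hardness needs a separate outcome-set (support) correspondence, since the probability-measure lemmas only govern the quantitative modes---identifies a step the paper silently glosses over, so your treatment is, if anything, more careful than the paper's.
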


\begin{table}[h]
\begin{center}
\begin{scriptsize}
\begin{tabular}{|l|c|c|c|c|c|c|}
\cline{2-7}

\multicolumn{1}{l}{}              & \multicolumn{2}{|c|}{Sure}          & \multicolumn{2}{|c|}{Almost}          & \multicolumn{2}{|c|}{Positive} \\
%%\multicolumn{1}{l}{}              & \multicolumn{2}{|c|}{player~$2$ perfect} & \multicolumn{2}{|c|}{player~$1$ perfect} & \multicolumn{2}{|c|}{}   \\
\cline{2-7}
\multicolumn{1}{l|}{{\small \strut}} & All-powerful     & Not-all-powerful               &   All-powerful     & Not-all-powerful    &  All-powerful     & Not-all-powerful \\
\hline
Safety {\small \strut}           & EXP-complete   &  EXP-complete               &   EXP-complete   &  EXP-complete              &  EXP-complete   &  2EXP, EXP     \\
\hline
Reachability {\small \strut}           & EXP-complete   &  EXP-complete               &   EXP-complete   &  2EXP, EXP              &  PTIME-complete   &  EXP, PTIME     \\
\hline
B\"uchi {\small \strut}           & EXP-complete   &  EXP-complete               &   EXP-complete   &  2EXP, EXP              &  Undec.   &  Undec.     \\
\hline
coB\"uchi {\small \strut}           & EXP-complete   &  EXP-complete               &  Undec.   &  Undec.     &   EXP-complete   &  2EXP, EXP              \\
\hline
Parity {\small \strut}           & EXP-complete   &  EXP-complete               &  Undec.   &  Undec.     &   Undec.   &  Undec.              \\
\hline
\hline
\end{tabular}
\end{scriptsize}
\end{center}
%\vspace{2mm}
\caption{Complexity of games with probabilistic uncertainty with parity objectives, where for each entry we present 
the upper and lower bound, or undecidability.}\label{tab:complexity}
\end{table}

\section{Conclusion}
In this work we considered games with probabilistic uncertainty, which is 
natural for many problems, and has not been considered before. 
We present a reduction of such games to classical partial-observation games 
and a reduction of POMDPs to games with probabilistic uncertainty.
As a consequence we establish the precise decidability frontier for 
games with probabilistic uncertainty.
Table~\ref{tab:complexity} summarizes our results. %%\mynote{GIVE A TABLE SUMMARIZING THE RESULTS?}
For most problems we establish EXPTIME-complete bounds.
For some decidable problems we establish 2EXPTIME upper bounds, and EXPTIME lower
bounds, and establishing the precise complexity results 
%%(PSPACE upper bounds or EXPTIME lower bounds) 
are interesting open problems.

\end{document}